\documentclass[10pt,journal]{IEEEtran}

\usepackage{amsmath,amssymb,amsthm}
\usepackage{booktabs}
\usepackage{graphicx}
\usepackage{tikz}
\usetikzlibrary{arrows.meta,shapes.symbols}
\usepackage{enumitem}
\usepackage{cite}
\providecommand{\texorpdfstring}[2]{#1}

\newtheorem{theorem}{Theorem}
\newtheorem{lemma}[theorem]{Lemma}

\newtheorem{assumption}{Assumption}

\newcommand{\bB}{B}
\newcommand{\E}{\mathbb{E}}
\newcommand{\Prob}{\mathbb{P}}
\newcommand{\Nz}{\mathbb{N}_0}
\newcommand{\indic}[1]{\mathbf{1}\!\left[#1\right]}

\begin{document}

\title{A Queueing-Stability Criterion for Causal IPD-QIM Network Flow
Watermarking}

\author{Jiuxiang~Cao,~Guang~Cheng,~and~Guangjie~Liu%
\thanks{J. Cao is with the School of Cyber Science and Engineering, Southeast
University, Nanjing 211189, China, and also with Jiangsu Jinling Science
\& Technology Group Co., Ltd., Nanjing 210000, China
(e-mail: caojiuxiang@seu.edu.cn).}%
\thanks{G. Cheng is with the School of Cyber Science and Engineering, Southeast
University, Nanjing 211189, China (e-mail: chengguang@seu.edu.cn).}%
\thanks{G. Liu is with the School of Electronics and Information Engineering,
Nanjing University of Information Science and Technology, Nanjing 210044, China
(e-mail: gjieliu@nuist.edu.cn).}%
\thanks{Corresponding author: Guangjie Liu.}}

\markboth{IEEE TRANSACTIONS ON NETWORK AND SERVICE MANAGEMENT,~VOL.~XX, NO.~X, JULY~2026}%
{Cao \MakeLowercase{\textit{et al.}}: A Queueing-Stability Criterion for Causal IPD-QIM Flow Watermarking}

\IEEEtitleabstractindextext{%
\begin{abstract}
On multi-hop encrypted links such as Tor and cascaded VPNs, tunneling flattens
packet lengths and protocol fields, leaving inter-packet delay (IPD) as the main
carrier for active flow attribution. Causality, however, lets the embedder delay
packets but never advance them, so each quantization-index-modulation (QIM)
alignment injects nonnegative dwell into a delay buffer; unbounded dwell would
break lattice alignment through overflow and impose unacceptable delay on the
host connection. Whether a causal QIM watermark can be embedded stably on bursty
real traffic has largely been left to empirical configuration rather than
analysis. We model the embedder as a reflected dwell queue under the actual fixed
dual-lattice, equiprobable-random-bit rule, where the
injection is state-dependent---set by the current effective interval and the
bit---rather than an exogenous uniform variable. The substitution
$Y_i=\delta_i-r_i$ gives only an algebraic Lindley-form identity; stability is
governed by the busy-state drift at large dwell, where the effective interval
collapses to zero and the mean injection becomes $\Delta/4$. Hence, away from the
critical boundary, the buffer is stable if and only if $\mu_d>\Delta/4$ (i.e.\
$\Delta<4\mu_d$) for i.i.d.\ backgrounds, and, under stationary-ergodic and
finite-state Markov-modulated traffic with instantaneous overload, if and only if
the time-average intensity $\bar\rho<1$. With the exogenous decoding floor $\Delta\ge c\sigma_\xi$
($c=4Q^{-1}(\epsilon/2)$), this yields the operating window
$\Delta\in[c\sigma_\xi,4\bar\mu_d)$. Controlled simulations confirm a sharp
transition at $\rho=1$ set only by the mean; on four real application-level IPD
traces, with each simulated chain confined to a single flow (no cross-flow
splicing), the criterion gives the correct stability direction under flow-local
correlation and burstiness, while pooled cross-flow means overestimate the margin.
These results provide a testable stable-embeddability criterion and a
quantization-step configuration baseline for causal QIM network flow watermarking.
\end{abstract}

\begin{IEEEkeywords}
Network flow watermarking, quantization index modulation, causal constraint,
queueing stability, Lindley recursion, operating window, network service
management.
\end{IEEEkeywords}}

\maketitle
\IEEEdisplaynontitleabstractindextext
\IEEEpeerreviewmaketitle

\section{Introduction}\label{sec:intro}

\IEEEPARstart{N}{etwork} flow watermarking is an active traffic-analysis technique: at one
observation point the embedder perturbs the timing of a flow to embed a
recoverable identifier, so that the same flow can be re-identified elsewhere in
the network. It is a basic means of active traceback and attack attribution
on encrypted links~\cite{Wang2003Stepping} and remains an active research
direction~\cite{Li2026Survey}. On multi-hop encrypted paths such as Tor and
cascaded VPNs, tunnel encapsulation flattens packet lengths and protocol fields,
leaving inter-packet delay (IPD) as essentially the only usable watermark
carrier~\cite{Wang2007EffectiveAttack,houmansadr2009rainbow,houmansadr2011swirl}.
Quantization index modulation (QIM)~\cite{Chen2001QIM} quantizes the host to a
lattice that depends on the watermark bit and thus looks naturally suited to IPD
embedding. Flow watermarking, however, is causal: the embedder may only delay
packets, never advance them, whereas image or audio watermarking may freely
increase or decrease sample values. This asymmetry is not an implementation
detail but a prerequisite for whether the watermark can run stably.

\begin{figure*}[t]
\centering
\resizebox{\textwidth}{!}{%
\begin{tikzpicture}[>=Latex,
  net/.style={draw, thick, rounded corners, align=center, font=\footnotesize, fill=white, minimum width=18mm, minimum height=12mm},
  host/.style={draw, align=center, font=\scriptsize, fill=black!4, minimum width=8mm, minimum height=6mm},
  flow/.style={-{Latex[length=2.6mm]}, thick},
  lead/.style={dashed, gray!55},
  fail/.style={font=\scriptsize, align=center, text=red!55!black}]

\draw[dashed, thick, rounded corners, fill=black!2] (-0.1,2.3) rectangle (5.7,5.05);
\node[font=\footnotesize\itshape, anchor=north west] at (0.05,4.97) {Operator's managed domain};
\node[host] (h1) at (0.9,4.15) {host};
\node[host] (h2) at (0.9,3.2) {host};
\node[net] (edge) at (4.05,3.68) {Egress router/\\switch\\(embedder)};
\draw[flow] (h1.east) -- (edge.west);
\draw[flow] (h2.east) -- (edge.west);
\node[cloud, draw, thick, cloud puffs=13, cloud puff arc=115, aspect=2.3, align=center,
      font=\footnotesize, minimum width=34mm, minimum height=22mm] (wan) at (8.6,3.68)
      {Public network\\(Tor / VPN,\\multi-hop)};
\draw[flow] (edge.east) -- node[above,font=\scriptsize]{marked flow} (wan.west);
\node[net] (det) at (12.5,3.68) {Downstream\\vantage\\(detector)};
\draw[flow] (wan.east) -- node[above,font=\scriptsize]{$+$ jitter $\sigma_\xi$} (det.west);
\node[net, fill=black!4] (attr) at (15.9,3.68) {Flow attribution\\/ traceback};
\draw[flow] (det.east) -- (attr.west);

\draw[lead] (edge.south) -- (3.1,2.05);
\draw[dashed, rounded corners] (0.1,-0.15) rectangle (6.1,2.05);
\node[font=\scriptsize\bfseries] at (3.1,1.85) {Embedding: delay-only IPD-QIM};
\foreach \x in {0.8,1.5,2.2,2.9,3.6,4.3,5.0} \draw[gray!40,very thin] (\x,0.78)--(\x,1.35);
\draw[->](0.5,1.55)--(5.6,1.55); \node[font=\scriptsize,anchor=east]at(0.47,1.55){in};
\foreach \x in {1.0,1.9,3.1,4.4} \draw[thick](\x,1.44)--(\x,1.66);
\draw[->](0.5,1.0)--(5.6,1.0); \node[font=\scriptsize,anchor=east]at(0.47,1.0){out};
\foreach \x in {1.5,2.2,3.6,5.0} \fill (\x,1.0) circle(1.3pt);
\foreach \a/\b in {1.0/1.5,1.9/2.2,3.1/3.6,4.4/5.0} \draw[->,blue!65](\a,1.42)--(\b,1.12);
\node[font=\scriptsize,blue!65]at(2.9,1.28){round \emph{up} to $\Delta$-lattice};
\node[fail,text=black,anchor=north]at(3.1,0.62){per-flow buffer $\delta\le B$;\ knob $\Delta$};
\node[fail,anchor=north]at(3.1,0.2){$\Delta$ too large $\Rightarrow$ $\delta$ diverges $\Rightarrow$ QoS breach};
\draw[lead] (det.south) -- (12.2,2.05);
\draw[dashed, rounded corners] (9.3,-0.15) rectangle (15.1,2.05);
\node[font=\scriptsize\bfseries] at (12.2,1.85){Decoding: de-quantize IPDs};
\foreach \x in {10.0,10.7,11.4,12.1,12.8,13.5,14.2} \draw[gray!40,very thin](\x,0.95)--(\x,1.5);
\draw[->](9.6,1.4)--(14.7,1.4); \node[font=\scriptsize,anchor=east]at(9.57,1.4){recv};
\foreach \x in {10.08,10.75,11.32,12.16,12.88,13.44,14.25} \draw[thick](\x,1.29)--(\x,1.51);
\node[font=\scriptsize]at(12.2,1.0){recovered bits \texttt{1\,0\,1\,1\,0}};
\node[fail,text=black,anchor=north]at(12.2,0.62){reliable if $\Delta\ge c\sigma_\xi$};
\node[fail,anchor=north]at(12.2,0.2){$\Delta$ too small $\Rightarrow$ undecodable};

\end{tikzpicture}%
}
\caption{Deployment view of causal IPD-QIM flow watermarking.}
\label{fig:deploy}
\end{figure*}

In operational terms, network flow watermarking is a capability the operator runs
at the boundary of the managed domain (Fig.~\ref{fig:deploy}): an edge router or
switch at the egress embeds the identifier by perturbing the timing of an outgoing
flow, and a detector at a downstream vantage point recovers it, so the operator can
attribute or trace that flow across the network it manages. The callouts in
Fig.~\ref{fig:deploy} zoom into the per-flow embedding and decoding mechanism
analyzed below. Because the embedder
acts on live service traffic at a forwarding device, deploying it is a
network-management task, and the quantization step $\Delta$ is effectively the only
knob the operator sets. Its two failure modes are both management failures: too
large a $\Delta$ lets the edge queue's delay grow without bound and breaks the
quality of service of the very traffic that carries the mark, whereas too small a
$\Delta$ makes the mark undecodable and renders the attribution capability useless.
Configuring the watermark is therefore the question of \emph{for what range of
$\Delta$ the edge embedder stays stable (QoS-safe) while remaining decodable
(attribution-effective)}.

The consequence of causality is structural. Each embedding aligns the current
packet to a lattice point no earlier than its arrival, injecting a nonnegative
alignment cost into a delay buffer, so the buffer depth accumulates packet by
packet. If this dynamics is unstable, the depth grows without bound with the
number of embedded packets: excessive dwell triggers budget truncation, so the
output IPD can no longer land exactly on the target lattice point, causing
overflow and decoding errors; the same dwell also acts on the real payload
packets, and once it exceeds transport- or application-layer tolerances it
forces connection resets or session interruption. The core question of causal
IPD-QIM is therefore not whether the embedder is convenient to implement, but
whether the watermark can be embedded stably without breaking the host traffic
or its own decodability. Classical QIM assumes the embedder may freely rewrite
host samples and hence never analyzes the stability of this
recursion~\cite{Chen2001QIM,Cox2007}; the flow-watermarking literature has long
used delay injection but mostly left its sustainability at the level of empirical
configuration.

We phrase this as the \emph{stable embeddability} of causal QIM watermarking:
given a background IPD process and a quantization step $\Delta$, does the
embedding dwell admit a finite steady state while the step still meets a basic
decoding-reliability requirement? Under a deployment-given buffer budget,
$\Delta$ is the main free parameter of the system, and its admissible range is
squeezed from two sides---queue stability sets an upper bound and decoding
reliability a lower bound. Working on the queue side, we answer three
progressive questions. (i)~Under what condition does the buffer recursion admit a
steady state with non-diverging dwell? (ii)~Does this stability survive bursty,
autocorrelated, even instantaneously overloaded real traffic? (iii)~Combining the
queue-side upper bound with the decoding-side lower bound, can one obtain a
testable operating window? The first two fix the upper bound
(Theorems~\ref{thm:stability} and~\ref{thm:robust}); the third adds the lower
bound and assembles the window (Section~\ref{sec:stab_thm}). The mathematical
object running through all three is the \emph{reflect-then-inject} dwell recursion
of causal QIM: a one-step substitution $Y_i=\delta_i-r_i$ writes it in standard
Lindley form (Lemma~\ref{lem:reduction}), but under a fixed lattice the injection
$r_i$ is coupled to the queue state and is not an exogenous independent input, so
stability is decided not by classical i.i.d.\ queueing results but by the drift
between the deep-buffer busy-state mean injection $\Delta/4$ and the background
service.

Our main contributions are as follows.
\begin{enumerate}[label=(\arabic*),leftmargin=2.2em,itemsep=2pt,topsep=2pt]
\item \textbf{Recursion identity and i.i.d.\ stability of the causal QIM buffer}
(Lemma~\ref{lem:reduction}, Theorem~\ref{thm:stability}). The dwell recursion is a
nonstandard \emph{reflect-then-inject} form; the substitution $Y_i=\delta_i-r_i$
writes it in Lindley form, but under a fixed lattice $r_i$ is set jointly by the
effective interval and the random bit and cannot be treated as an exogenous
uniform input. We therefore analyze the large-dwell drift: when the buffer is
deep the effective interval collapses to $0$, the random bit yields an average
injection of $\Delta/4$ over the two fixed lattices, and a Foster--Lyapunov
drift criterion gives the stability condition $\mu_d>\Delta/4$, translating
\emph{can the watermark be embedded stably} into the testable inequality
$\Delta<4\mu_d$.
\item \textbf{Burst-robust stability} (Theorem~\ref{thm:robust}). Under a
stationary-ergodic background (including finite-state Markov modulation), buffer
stability depends only on the time-average intensity ($\bar\rho<1$), not on the
instantaneous intensity: even if some intervals are instantaneously overloaded
($\rho(J)>1$, dwell accumulating), the queue is globally stable as long as the
slow intervals drain the buffer on average. This substantively extends the
i.i.d.\ condition $\mu_d>\Delta/4$ to real bursty traffic; on four
application-level real IPD traces, even the most autocorrelated scenarios keep a
finite steady-state dwell with the criterion pointing the right way.
\item \textbf{Decoding-side lower bound and the step-size operating window}
(Assumption~\ref{asm:decode}, Sections~\ref{sec:decoding} and~\ref{sec:stab_thm}).
From channel jitter, via a single-symbol error-rate upper bound, we derive a
sufficient lower bound $\Delta\ge c\sigma_\xi$ with $c=4Q^{-1}(\epsilon/2)$;
combined with the queue-stability upper bound it gives the conservative window
$\Delta\in[c\sigma_\xi,4\bar\mu_d)$, nonempty iff $\sigma_\xi<4\bar\mu_d/c$,
turning \emph{how to choose the step} from trial and error into an explicit
two-sided criterion. This lower bound and the resulting window supply the
constraint complementary to the stability upper bound.
\end{enumerate}

This work focuses on the stable embeddability of causal QIM watermarking in
encrypted-flow attribution: a stability criterion away from the critical
boundary, its robust extension to real bursty traffic, and the conservative
step-size operating window. Section~\ref{sec:related} reviews the QIM/causal
embedding and queueing-stability threads and summarizes the gap;
Section~\ref{sec:model} builds the queueing model, embedding rule, and basic
assumptions; Section~\ref{sec:stability} first derives the decoding-side
functional lower bound $\Delta\ge c\sigma_\xi$, then the recursion identity and
the i.i.d.\ / burst-robust stability results (the upper bound), and assembles the
window; Section~\ref{sec:numerical} validates the criterion on controlled
synthetic backgrounds and, chiefly, on real application-level IPD continuous
traces; Section~\ref{sec:conclusion} concludes.

\section{Related Work}\label{sec:related}

\subsection{Network flow watermarking}\label{sec:rel_flowwm}

Active flow watermarking perturbs a flow's timing or packet lengths at ingress
and detects the perturbation at egress to correlate the two ends; it is the main
route for active attribution of encrypted traffic, tracing back to Cabuk
et~al.'s~\cite{Cabuk2004} design and detection of IP covert timing channels. Wang
and Reeves~\cite{Wang2003Stepping} proposed a watermark-correlation scheme that
slightly adjusts the timing of selected packets, designed specifically to resist
timing perturbation, bringing IPD modulation into stepping-stone correlation;
Wang et~al.~\cite{Wang2007EffectiveAttack} further injected a unique watermark in
the inter-packet timing domain so that long flows are uniquely identifiable.
Later improvements followed two lines, lower visibility and higher robustness.
RAINBOW~\cite{houmansadr2009rainbow} is non-blind: it uses ingress-archived
original IPDs for differential detection, embedding with far smaller delay and
removing the self-interference of the host flow under blind detection;
SWIRL~\cite{houmansadr2011swirl} trades a per-flow marking pattern for
scalability while keeping delay small. The DSSS scheme of Yu
et~al.~\cite{Yu2007DSSS} instead uses direct-sequence spread spectrum, marginally
adjusting the sender rate with a pseudo-noise code to embed a covert
spread-spectrum signal. After 2018, schemes
diversified the carrier: Yang et~al.~\cite{Yang2022SlidingTor} embed with ON/OFF
timing on Tor and detect with a sliding-window $L_1$ distance,
HeteroTiC~\cite{Li2022HeteroTiC} combines order, timing, and length into a
heterogeneous channel, DynMark~\cite{Qiao2025DynMark} adds a dynamic
packet-count dimension, and Cui et~al.~\cite{Cui2023Hex} raise coding efficiency
with hexadecimal encoding. These schemes advance robustness or efficiency but all
treat delay injection as an implementation detail, and none analyzes whether the
injection can be sustained under the \emph{delay-only, never advance} constraint.

Alongside the embedding side, attacks and defenses have evolved. Kiyavash
et~al.~\cite{Kiyavash2008} showed that interval-based watermarks introduce
temporal correlation, enabling a multi-flow attack that, from a few marked flows,
detects the watermark, recovers secret parameters, and even removes it, and that
applies to both anonymous communication and stepping-stone detection. Lin and
Hopper~\cite{Lin2012} argued that being covert only to passive detection is
insufficient, and proposed stronger known/chosen-flow attack models. Recent
defenses such as DeMarking~\cite{Yuan2025DeMarking} rewrite IPDs in real time
with a generative adversarial network to erase the watermark while trying not to
disrupt the service. This pressure keeps pushing the embedding side toward
smaller, more controlled timing perturbations: the smaller the perturbation, the
more the embedder must align packets precisely to lattice points, and the more
critical the prerequisite of whether the buffer stays stable under the causal
constraint---precisely the concern of this paper.

At the survey level, Iacovazzi and Elovici~\cite{Iacovazzi2017Survey} classify
flow-watermarking algorithms by carrier, visibility, and robustness; the recent
problem-oriented survey of Li et~al.~\cite{Li2026Survey} reviews the 2001--2025
development and lists five open problems---identity representation,
embedding/detection placement, robustness to hopping, communication efficiency,
and fast detection in large background flows. That survey notes that ``embedding
should not markedly degrade the host's quality of service'' is a necessary
condition many methods overlook, yet still treats it as an efficiency--robustness
trade-off, without touching its root cause: whether the embedding delay grows
without bound under the \emph{delay-only, never advance} constraint.

\subsection{Quantization index modulation}\label{sec:rel_qim}

QIM, introduced by Chen and Wornell~\cite{Chen2001QIM}, is a \emph{provably good}
class of embedding: the sender picks one of a set of lattices (or a more general
codebook) according to the bit to embed and quantizes the host onto it; the
receiver only decides which lattice the (possibly noisy) signal is near, without
knowing the host itself. The same work proposes a distortion-compensated variant
(DC-QIM) that trades a linear combination of quantized and original signal for a
better noise-resistance/distortion balance. The theoretical justification of QIM
rests on Costa's \emph{writing on dirty paper}~\cite{Costa1983}: when the sender
\emph{non-causally} knows a Gaussian interference to be added on the channel, a
clever encoding (random binning / lattice coding) can fully cancel it, so the
capacity equals that of the interference-free channel. This non-causal assumption
is exactly what QIM-type methods rely on to approach that capacity: in image or
audio watermarking the host
is naturally known to the sender in advance, but in flow watermarking the arrival
times of future packets are precisely what the embedder cannot know ahead---this
mismatch is the starting point of the present paper. Cox
et~al.~\cite{Cox2007} give a widely used standard reference on the positioning of
QIM and spread-spectrum watermarking. QIM remains active: Mao
et~al.~\cite{Mao2024CAQIM} propose content-aware CA-QIM/CAMD-QIM to lower
distortion, and Lyu~\cite{Lyu2023Dithering} optimizes the dither structure to
improve PSNR/SSIM. All these assume the embedder may freely rewrite host samples
and need not consider the \emph{delay-only} physical constraint---natural for
image/audio but exactly what fails in the flow-watermarking setting above; the
two threads intersect at causality.

\subsection{Embedding queues and stability tools}\label{sec:rel_queue}

The Lindley recursion~\cite{Lindley1952} is the basis of steady-state analysis
for the single-server queue; Loynes~\cite{Loynes1962} proved, by a
backward-coupling construction, the necessary and sufficient condition (negative
drift) for the waiting-time sequence to converge to a unique stationary regime
under stationary-ergodic input; Foster~\cite{Foster1953} gave the drift criterion
for positive recurrence of Markov chains, and Meyn and
Tweedie~\cite{Meyn1993Markov} and Asmussen~\cite{Asmussen2003} developed the
stability theory of chains systematically. These tools target the standard
Lindley form \emph{add the increment, then reflect}, whereas the causal-QIM dwell
recursion is the nonstandard \emph{reflect, then inject} form
(Eq.~\eqref{eq:delta_update}), so classical results do not apply directly;
Lemma~\ref{lem:reduction} provides the reduction step bridging this gap. Notably,
Loynes' result requires no independence, only stationary ergodicity, and
Birkhoff's ergodic theorem~\cite{Durrett2019} together with the spectral theory
of Markov additive processes~\cite{Asmussen2003} are the companion tools needed to
extend stability to non-independent, bursty traffic. Empirically, Paxson and
Floyd~\cite{Paxson1995WideArea} observed long ago that wide-area IPDs deviate
markedly from Poisson and are bursty, so a stability analysis must tolerate
non-i.i.d., bursty real backgrounds.

\subsection{Feasibility of causal QIM}\label{sec:rel_gap}

Under the causal constraint, whether this embedding queue is feasible splits into
two progressive layers. The first, most basic: can the buffer grow without bound?
In classical QIM the embedder is assumed to rewrite host samples freely, so this
constraint never exists and the question never arises; the bipolar modulation of
spread-spectrum schemes runs into the \emph{cannot advance packets} difficulty
(Section~\ref{sec:why_qim} analyzes this for DSSS~\cite{Yu2007DSSS}-type schemes),
and existing work mostly circumvents it with per-packet positive bias or fixed
pre-buffering, rarely turning it into a testable stability criterion.
Theorem~\ref{thm:stability} in Section~\ref{sec:stability} is our answer to this
layer ($\mu_d>\Delta/4$). The second layer is harder: real traffic is neither
independent nor stationary---does the previous conclusion still hold? The key to
fixed-lattice random-bit QIM is not to reduce the injection to some uniform
distribution but to identify the deep-buffer busy-state mean injection $\Delta/4$;
on this drift structure we extend to Markov-modulated and stationary-ergodic
backgrounds and prove that stability is decided only by the time-average drift
(Theorem~\ref{thm:robust}). Two layers, two answers, together give the queue-side
upper bound on $\Delta$. Beyond the upper bound, $\Delta$ also needs a
decoding-side lower bound before a testable conservative window can be obtained;
this half comes from decoding reliability and is not a gap left by the two threads
above.

\section{A Queueing Model of the Causal QIM Embedder}\label{sec:model}

The causal QIM embedder is essentially a single-server queue: arriving packets
are customers, the quantization alignment cost is the service, and a packet's
dwell time is its waiting time. This section makes the correspondence
precise---first why, under causality, only QIM gives a queueable nonnegative
one-sided injection (Section~\ref{sec:why_qim}), then the dwell recursion from the
embedding rule (Section~\ref{sec:qim_embed}) and the budgeted buffer dynamics
(Section~\ref{sec:buffer}), and finally the assumptions the analysis relies on
(Section~\ref{sec:assump}). Table~\ref{tab:notation} collects the main symbols.

\begin{table}[!t]
\caption{Main symbols.}\label{tab:notation}
\centering\footnotesize
\begin{tabular}{@{}p{2.0cm}p{5.6cm}@{}}
\toprule
Symbol & Meaning \\
\midrule
$d_i$; $\mu_d,\bar\mu_d,\sigma_d$ & background IPD (inter-arrival), its mean, time-average mean, std \\
$\Delta$ & QIM quantization step \\
$r_i\in[0,\Delta)$ & lattice-alignment injection; under fixed lattices set by $\varepsilon_i,w_i$ \\
$\delta_i=t'_i-t_i$ & dwell of packet $i$ (buffer depth) \\
$Y_i=\delta_i-r_i$ & reduced waiting-time variable (Section~\ref{sec:reduction}) \\
$X_i=r_{i-1}-d_i$ & reduced increment; state-dependent under fixed lattices \\
$\rho=\Delta/(4\mu_d)$ & busy-state traffic intensity ($\bar\rho$: time-average) \\
$\bB$ & buffer budget \\
$\delta_\infty$ & steady-state dwell \\
$\sigma_\xi$ & std of channel net IPD jitter (exogenous) \\
$p(\gamma),\bar p(\gamma)$ & single-symbol error rate and its conservative bound, $\gamma=\Delta/\sigma_\xi$ \\
$\epsilon$ & target single-symbol error rate \\
$c=4Q^{-1}(\epsilon/2)$ & conservative step coefficient ensuring $p(\gamma)\le\epsilon$ \\
\bottomrule
\end{tabular}
\end{table}

\subsection{Causal feasibility of the modulation scheme}\label{sec:why_qim}

The choice of modulation scheme is tightly constrained by two premises:
\emph{one-sidedness}---causality dictates that the embedder may only delay a
packet, never advance it, so the injection $r_i\ge0$ always holds; and
\emph{statistical unpredictability}---the protocol of the watermarked traffic is
arbitrary, so the mean, variance, and even distributional shape of its IPDs are
unknown to the embedder and may switch abruptly within one session. Both
mainstream modulation ideas run into trouble under these constraints.

The first idea follows RAINBOW/SWIRL: encode a bit by aggregating an IPD
statistic over a symbol window---a slow window accumulates positive delay to raise
the mean, a fast window consumes pre-accumulated buffer to lower it. But the pair
is asymmetric: the slow window's \emph{spreading} can accumulate indefinitely,
whereas the fast window's \emph{shrinking} can only be supported by a pre-accumulated
buffer depth $D_0$; once a natural IPD exceeds $\mu_d+D_0$, that single late packet
drains the $D_0$-deep buffer at once, \emph{lowering the mean} fails, and no larger
buffer or stronger error correction can help.

The second idea is per-packet spread spectrum: superpose a pseudo-random bipolar
sequence of positive/negative adjustments on each IPD and recover the bit by
correlation. But a negative chip means the dwell must decrease, exactly what
causality forbids---the per-packet positive-bias remedy makes the accumulated
delay grow monotonically with the sequence length $N$, and the one-shot
pre-buffering remedy fails frequently in dense flows for lack of buffer.

Both failures point to one root cause: both encodings require the dwell to shrink
at certain moments, and shrinking is the one direction causality forbids. QIM
avoids this operation: its quantizer rounds upward ($q_w(x)\ge x$), each embedding
injecting only a nonnegative alignment cost $r_i\in[0,\Delta)$ into the buffer and
working from the zero state; encoding a $0$ or a $1$ both amount to rounding up to
a different lattice ($Q_0$ at $k\Delta$, $Q_1$ at $(k+\tfrac12)\Delta$), with no
forced draining. We stress, however, that \emph{each injection is nonnegative} is
only necessary, not sufficient, for stability---under the fixed-lattice random-bit
model without per-symbol dither, the feasibility criterion away from the critical
boundary, $\mu_d>\Delta/4$, is given in Section~\ref{sec:stability} by
Theorem~\ref{thm:stability}.

\subsection{The IPD-QIM embedding rule and dwell recursion}\label{sec:qim_embed}

Fix a step $\Delta>0$. To avoid mixing a physical-layer minimum inter-send gap
with the watermark lattice, we adopt a zero-phase convention: if an
implementation must keep a fixed minimum gap, first subtract it as a common
offset from the IPD, then analyze the extra dwell caused by the lattice. The two
complementary lattices are
\begin{equation}
Q_0=\{k\Delta\mid k\in\Nz\},\qquad
Q_1=\bigl\{(k+\tfrac12)\Delta\mid k\in\Nz\bigr\}.
\label{eq:qim_grids}
\end{equation}
With arrival time $t_i$ and send time $t'_i$ of packet $i$, embedding bit $w_i$
follows the causal QIM rule
\begin{equation}
t'_i = t'_{i-1} + q_{w_i}(\varepsilon_i),\qquad
\varepsilon_i=\max(0,\,t_i-t'_{i-1}),
\label{eq:qim_rule}
\end{equation}
where $q_w(x)\triangleq\min\{q\in Q_w\mid q\ge x\}$ rounds up to the nearest point
of $Q_w$ and $\varepsilon_i$ is the effective interval. Writing the dwell
$\delta_i\triangleq t'_i-t_i\ge0$, the background IPD $d_i\triangleq t_i-t_{i-1}$,
and the alignment injection $r_i\triangleq q_{w_i}(\varepsilon_i)-\varepsilon_i\in
[0,\Delta)$, we have $q_{w_i}(\varepsilon_i)=\varepsilon_i+r_i$. Substituting
into~\eqref{eq:qim_rule} gives $\delta_i=t'_{i-1}+\varepsilon_i+r_i-t_i$; with
$t'_{i-1}-t_i=\delta_{i-1}-d_i$ and splitting into two cases---empty
($d_i\ge\delta_{i-1}$, so $\varepsilon_i=t_i-t'_{i-1}$, giving $\delta_i=r_i$) and
busy ($d_i<\delta_{i-1}$, so $\varepsilon_i=0$, giving
$\delta_i=\delta_{i-1}-d_i+r_i$)---the merged dwell recursion is
\begin{equation}
\delta_i=\max(0,\,\delta_{i-1}-d_i)+r_i,\qquad r_i\in[0,\Delta).
\label{eq:delta_update}
\end{equation}
Under the fixed-lattice model $r_i$ is not an exogenous uniform variable but a
state-dependent injection set jointly by the effective interval $\varepsilon_i$
and the bit $w_i$. In particular, when the buffer is deep and $\varepsilon_i=0$,
$w_i=0$ gives $r_i=0$ and $w_i=1$ gives $r_i=\Delta/2$, so the busy-state mean
injection is $\Delta/4$. Decoding uses nearest-lattice hard decision: with
$\phi_i=d''_i\bmod\Delta$ ($d''_i$ the received IPD, $\phi_i$ the in-lattice
phase), $\hat{w}_i=\indic{\phi_i\in[\Delta/4,3\Delta/4)}$; with channel net IPD
jitter $\xi_i$ (std $\sigma_\xi$), the single-symbol hard-decision error rate has
the conservative bound $\Prob(|\xi_i|>\Delta/4)$. We focus on the embedder queue,
treat $\xi$ as exogenous, and only in Section~\ref{sec:decoding} use it to derive
the lower bound on the step; a full channel-side error analysis is out of scope.

\subsection{Buffer dynamics and the budget constraint}\label{sec:buffer}

The buffer's physical capacity is not infinite; its upper limit has two
independent sources. Section~\ref{sec:intro} noted the first: unbounded dwell
eventually hits the tolerance of the upper-layer service carried by the tunnel,
forcing connection resets or session interruption. The second lies in the
embedder itself: in real deployment it often maintains one such queue for each of
many concurrent flows (e.g.\ tens of Gbps at an enterprise egress), so a single
queue's memory grows linearly in the budget $\bB$ and the total memory grows with
the product of concurrent-flow count and $\bB$---a hardware constraint independent
of any single flow. We use $\bB$ (ms) to capture, in a unified way, the maximum tolerable dwell under
these two constraints. The
constrained recursion is
\begin{equation}
\delta_i=\min\!\Bigl(\max(0,\,\delta_{i-1}-d_i)+r_i,\;\bB\Bigr),\qquad
r_i\in[0,\Delta).
\label{eq:lindley}
\end{equation}
When the truncation $\min(\cdot,\bB)$ is active, the required dwell exceeds the
budget and the output IPD cannot land exactly on the target lattice point; we call
this an \emph{overflow}. We treat $\bB$ as a deployment-given fixed budget, using
it only to define overflow, not as a design variable. Two levels must be
distinguished: stability asks whether the unconstrained exact-alignment
recursion~\eqref{eq:delta_update} admits a finite steady state, decided only by
the large-dwell drift and independent of $\bB$; the budget asks, once stability
holds, how often a finite $\bB$ is breached. Once truncation occurs the output no
longer lands exactly on the target lattice and the subsequent effective interval
and phase change, so the truncated bounded chain cannot in turn prove stability.
Hence the analysis below runs on the unconstrained
recursion~\eqref{eq:delta_update}, with $\bB$ entering only as a threshold for the
downstream tail/overflow question. Define the effective traffic intensity
\begin{equation}
\rho\triangleq\frac{\Delta/4}{\E[d_i]}=\frac{\Delta}{4\mu_d},
\label{eq:rho}
\end{equation}
the ratio of mean injection to mean drain per packet in the deep-buffer busy
state. $\rho<1$ is the intuitive condition for the buffer not to grow without
bound, made rigorous in Theorem~\ref{thm:stability}.
Figure~\ref{fig:embedder_buffer} shows the per-packet processing flow, and
Figure~\ref{fig:buffer_dynamics} a typical dwell trajectory.

\begin{figure*}[t]
\centering
\begin{tikzpicture}[
  proc/.style={draw, rounded corners=3pt, minimum height=1.0cm, align=center, font=\small},
  arr/.style={->, >=Stealth, thick},
  lbl/.style={font=\footnotesize}
]
\node[proc, fill=blue!8, minimum width=4.2cm] (eff) at (0, 0)
    {Effective interval\\[2pt]$\varepsilon_i = \max(0,\; t_i - t'_{i-1})$};
\node[proc, fill=orange!12, minimum width=4.2cm] (qim) at (5.9, 0)
    {QIM lattice map\\[2pt]$d'_i = q_{w_i}(\varepsilon_i)$};
\node[font=\small] (in) at (-3.0, 0.9) {arrival $t_i$};
\draw[arr] (in.east) -| (eff.north);
\draw[arr] (eff) -- (qim) node[midway, above, lbl]{$\varepsilon_i$};
\draw[arr, teal!80!black] (5.9, 1.7) -- (qim.north) node[above, at start, font=\small]{bit $w_i$};
\node[proc, fill=red!5, draw=red!45, minimum width=9.6cm] (depart) at (2.95, -2.25)
    {Budget truncation \& send time\\[2pt]
     $\delta^{\mathrm{raw}}_i = t'_{i-1} + d'_i - t_i \;\ge\; 0$\\[1pt]
     $\delta_i = \min(\delta^{\mathrm{raw}}_i,\;\bB)$,\quad $t'_i = t_i + \delta_i$};
\draw[arr] (qim.south) -- (qim.south |- depart.north) node[midway, right, lbl]{$d'_i$};
\draw[arr, blue!60] (eff.south) -- (eff.south |- depart.north) node[midway, left, lbl]{$t_i$};
\node[font=\small] (out) at (8.9, -2.25) {output $t'_i$};
\draw[arr] (depart.east) -- (out);
\draw[arr, gray!70, dashed]
     (depart.south) -- ++(0, -0.55) -| (-2.6, 0) -- (eff.west)
     node[pos=0.25, below, lbl]{feedback $t'_{i-1}$};
\end{tikzpicture}
\caption{Per-packet processing flow of the causal QIM buffer.}\label{fig:embedder_buffer}
\end{figure*}

\begin{figure*}[t]
\centering
\begin{tikzpicture}[xscale=0.95,yscale=0.66,>=stealth,font=\footnotesize]
  \fill[red!7]    (0,6) rectangle (12.3,7.1);
  \fill[orange!10](0,4) rectangle (12.3,6);
  \draw[red!70!black,dashed,thick](0,6)--(12.3,6) node[right,font=\footnotesize]{$\bB$ (budget)};
  \node[red!50!black,font=\footnotesize] at (6,6.62){overflow};
  \draw[orange!60,dashed](0,4)--(12.3,4);
  \node[orange!70!black] at (7.8,5.15){adaptive-decay region\;$[\varepsilon_i \ll d_i]$};
  \draw[->](-0.3,0)--(12.6,0) node[right]{packet index $i$};
  \draw[->](0,-0.6)--(0,7.3) node[above]{$\delta_i$ (ms)};
  \foreach \y in {2,4,6}{\draw[thin](-0.1,\y)--(0.1,\y);\node[left] at (-0.12,\y){\y};}
  \foreach \x in {0,...,11}{\draw[thin](\x,-0.1)--(\x,0.1);\node[below,font=\scriptsize] at (\x,-0.12){\x};}
  \draw[blue!80!black,very thick]
    (0,0)--(0,3.5)--(1,3.5)--(1,1.5)--(2,1.5)--(2,0)--
    (3,0)--(3,3.5)--(4,3.5)--(4,3.0)--(5,3.0)--(5,2.5)--
    (6,2.5)--(6,1.5)--(6,5.5)--(7,5.5)--(7,5.0)--(8,5.0)--
    (8,4.0)--(9,4.0)--(9,3.5)--(9,6.0)--(10,6.0)--(10,4.0)--
    (11,4.0)--(11,2.0)--(12.2,2.0);
  \draw[->,red!80!black,thick](-0.28,0)--(-0.28,3.4);
  \draw[->,red!80!black,thick](2.72,0)--(2.72,3.4);
  \draw[->,red!80!black,thick](5.72,1.5)--(5.72,5.4);
  \draw[->,orange!80!black,very thick](9.3,3.5)--(9.3,5.9);
  \draw[->,red!40,dashed,thick](9.6,3.5)--(9.6,7.1);
  \node[orange!80!black,right] at (9.35,4.7){$\varepsilon_i$ decay};
  \node[red!40,right] at (9.65,6.5){no decay};
  \draw[<->,gray!70,thin](1.25,3.5)--(1.25,1.5);
  \node[gray!60!black,right] at (1.3,2.5){$d_1$ (drain)};
  \draw[->,thin,gray!70](2.65,-1.05)--(2.1,0.1);
  \node[gray!60!black,below] at (2.9,-1.05){reflection\;$(\delta_i\!\ge\!0)$};
  \node[right] at (1.05,4.55){QIM injection: $r_i = q_{w_i}(\varepsilon_i)-\varepsilon_i$};
  \draw[->,thin](1.0,4.55)--(0.08,3.6);
\end{tikzpicture}
\caption{A typical dwell trajectory ($\Delta=2$\,ms, budget $\bB=6$\,ms).}\label{fig:buffer_dynamics}
\end{figure*}

As Figure~\ref{fig:buffer_dynamics} shows, in each packet's processing the dwell
$\delta_i$ first drains naturally by the background interval $d_i$ (under the
reflection $\delta_i\ge0$) and is then raised by the QIM injection $r_i$; when the
dwell touches $\bB$ it enters the overflow region. The two region types depict the
adaptivity of the injection: when the dwell is shallow, the effective interval
$\varepsilon_i$ varies with arrivals and $r_i$ is small, so the buffer can fall
back via the background interval (adaptive-decay region); once the dwell is deep
enough that $\varepsilon_i\equiv0$, the injection no longer shrinks and the buffer
can only drain by $d_i$ (no-decay region)---exactly where overflow is prone and
the budget $\bB$ is needed.

\subsection{Basic assumptions}\label{sec:assump}

The analysis below relies on two assumptions on the queue dynamics
(Section~\ref{sec:stability}).

\begin{assumption}[Background IPD]\label{asm:bg}
The background inter-arrivals $\{d_i\}_{i\ge1}$ satisfy one of two conditions,
from weak to strong: in the baseline case they are i.i.d.\ with $d_i>0$ a.s.\ and
mean $\mu_d=\E[d_i]<\infty$; in a case closer to real traffic this is relaxed to
stationary-ergodic (including finite-state Markov modulation) with time-average
mean $\bar\mu_d<\infty$.
\end{assumption}

\begin{assumption}[Random bits and fixed lattices]\label{asm:inj}
The watermark bits $\{w_i\}_{i\ge1}$ are i.i.d.\ with
$\Prob(w_i=0)=\Prob(w_i=1)=1/2$ and independent of the background arrival process.
We introduce no per-symbol dither and do not assume a uniform effective phase;
the injection is always set by the fixed-lattice QIM rule
\[
r_i=R_{w_i}(\varepsilon_i),\qquad R_w(x)\triangleq q_w(x)-x .
\]
Hence $\{r_i\}$ is in general not i.i.d.\ but a state-dependent injection that
changes with the effective interval $\varepsilon_i=(d_i-\delta_{i-1})^+$.
\end{assumption}

Assumption~\ref{asm:bg} chooses a mathematical model for the arrival rhythm,
given in two tiers from baseline to generalization. The simplest tier treats each
inter-arrival as strictly positive, mutually independent, drawn from a common
distribution, with finite long-run mean---the most analyzable and easiest-to-
validate model, effectively memoryless: this instant's inter-arrival is unrelated
to the previous one. Real traffic is usually not so: sessions have busy and idle
phases, a quiet browsing spell may suddenly turn into a dense download, and
adjacent inter-arrivals are clearly correlated. The second tier is built for this:
it only requires the long-run average inter-arrival to be finite and the
statistics stable, allowing the traffic to switch among several states and
adjacent packets to be correlated, at the cost of replacing the simple
independent random walk in the proof by a more general ergodic theorem. The
conditions the stability conclusion relies on stop here---negative (time-average)
drift and ergodicity, with no further assumption on the shape of the background
distribution.

\section{Stable Embeddability and the Operating Window of
\texorpdfstring{$\Delta$}{Delta}}\label{sec:stability}

This section characterizes the stable embeddability of causal QIM watermarking and
the conservative admissible range of the step $\Delta$, squeezed from two sides and
given in turn. First a functional lower end (Section~\ref{sec:decoding}): to
guarantee decodability via a single-symbol sufficient bound, the step must meet a
conservative decoding floor $\Delta\ge c\sigma_\xi$ set by the channel jitter. Then
the main line: the algebraic identity between the causal-QIM recursion and the
Lindley form (Section~\ref{sec:reduction}), followed by a large-dwell drift
analysis for the fixed-lattice, random-bit state-dependent injection that gives the
i.i.d.\ stability condition $\mu_d>\Delta/4$ (Theorem~\ref{thm:stability}) and its
robust extension to bursty, autocorrelated traffic (Theorem~\ref{thm:robust}),
fixing the upper end $\Delta<4\bar\mu_d$; combining the two ends yields the
conservative, testable window $\Delta\in[c\sigma_\xi,4\bar\mu_d)$
(Section~\ref{sec:stab_thm}).

\subsection{Decoding feasibility: a functional lower bound on
\texorpdfstring{$\Delta$}{Delta}}\label{sec:decoding}

Before discussing queue stability, we establish a more basic functional
prerequisite: no matter whether the buffer is stable, if the decoder cannot
reliably recover the bit from the received IPDs the watermark does not stand. The
channel adds a propagation jitter $\xi$ to each IPD; the smaller $\Delta$ and the
denser the lattice, the more easily the nearest-lattice decision is flipped, so
\emph{decodability} itself forces a lower bound on the step. This bound is purely a
single-symbol SNR matter, independent of the decoding implementation---frame
synchronization, error-detecting codes, or multi-frame accumulation can further
lower the final error rate through observational redundancy but do not change the
single-symbol bound here.

\begin{assumption}[Decoding model]\label{asm:decode}
The channel adds i.i.d.\ jitter $\xi\sim\mathcal N(0,\sigma_\xi^2)$ to each IPD;
for the $\Delta$-step QIM nearest-lattice hard decision, the true single-symbol
error rate is $p(\gamma)$ with $\gamma=\Delta/\sigma_\xi$.
\end{assumption}

For any sent lattice point, if $|\xi|\le\Delta/4$ the received phase stays in the
correct decision region, so the error event is contained in $\{|\xi|>\Delta/4\}$.
Hence $p(\gamma)\le\bar p(\gamma)\triangleq2Q(\gamma/4)$ with
$Q(x)=\Prob(\mathcal N(0,1)>x)$. Here $\bar p$ is a conservative bound, not the
exact rate under periodic modulo decision, which would count as correct the cases
where noise shifts by whole steps back into a same-class lattice. Since $Q$ is
continuous and strictly decreasing, so is $\bar p(\gamma)$ on $\gamma>0$; thus for
any target $\epsilon\in(0,\tfrac12)$ there is a unique $c(\epsilon):=4Q^{-1}
(\epsilon/2)$ with $\bar p(c(\epsilon))=\epsilon$, and $\gamma\ge c(\epsilon)$
gives $p(\gamma)\le\epsilon$. This yields the functional lower bound
\begin{equation}
\Delta\ge c\,\sigma_\xi,\qquad c=4Q^{-1}\!\left(\tfrac{\epsilon}{2}\right),
\label{eq:decode_floor}
\end{equation}
i.e.\ the step must be at least $4Q^{-1}(\epsilon/2)$ times the jitter std to keep
the single-symbol error rate below $\epsilon$, depending only on the ratio
$\Delta/\sigma_\xi$.

This lower bound differs in origin from the stability upper bound below: it is a
decoding-side functional sufficient condition, not an intrinsic result of the
queue recursion. It rests on the idealized model of Assumption~\ref{asm:decode}
(i.i.d.\ Gaussian net jitter). $\xi$ is a channel perturbation distinct from the
host inter-arrival $d_i$; if the real $\xi$ is correlated or heavy-tailed, the
Gaussian tail underestimates extreme jitter, the $c$ needed for the same $\epsilon$
grows, and it must be recalibrated to the deployed channel's measured jitter. Thus
$\Delta\ge c\sigma_\xi$ gives a conservative order-of-magnitude lower end that
shifts as the channel model is refined but does not change the queue-intrinsic
upper end or the stability conclusion.

\subsection{Recursion identity and state-dependent injection}\label{sec:reduction}

The recursion~\eqref{eq:delta_update} is \emph{reflect, then inject} (first
$\max(0,\delta_{i-1}-d_i)$, then add $r_i$), unlike the standard Lindley form
\emph{add the increment, then reflect} for single-server waiting times. The
substitution below still writes it in Lindley form, but this is only an algebraic
identity: under a fixed lattice $r_i$ is set by $\varepsilon_i$ and $w_i$ and thus
coupled to the queue state, so one cannot treat $r_i$ as an exogenous uniform
input and directly invoke the classical Loynes result.

\begin{lemma}[Recursion identity]\label{lem:reduction}
For the budget-free dwell recursion~\eqref{eq:delta_update}, let
$Y_i\triangleq\delta_i-r_i=\max(0,\delta_{i-1}-d_i)\ge0$. Then for any realization
of $\{(d_i,r_i)\}$---without any independence or stationarity assumption---the
standard Lindley recursion
\begin{equation}
Y_i=\bigl(Y_{i-1}+X_i\bigr)^+,\qquad X_i\triangleq r_{i-1}-d_i,
\label{eq:lindley_std}
\end{equation}
holds, with $(x)^+=\max(0,x)$ and dwell $\delta_i=Y_i+r_i$. Under a fixed lattice
$X_i$ is in general not an i.i.d.\ increment, since
$r_{i-1}=R_{w_{i-1}}((d_{i-1}-\delta_{i-2})^+)$ depends on the past queue state;
hence Lemma~\ref{lem:reduction} states only the recursion structure and does not
by itself give a stability condition.
\end{lemma}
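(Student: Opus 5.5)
The identity is purely algebraic, so I will verify it realization by realization directly from the budget-free recursion~\eqref{eq:delta_update}, without using any distributional structure of $\{(d_i,r_i)\}$. First I record the two facts the definition gives. From~\eqref{eq:delta_update}, $\delta_i-r_i=\max(0,\delta_{i-1}-d_i)$, so $Y_i\ge0$ and $\delta_i=Y_i+r_i$ hold by construction for every $i\ge1$. Applying the same definition one step back gives $\delta_{i-1}=Y_{i-1}+r_{i-1}$. For $i=1$ I will adopt the convention $\delta_0=0$ with $Y_0=r_0=0$, which is consistent with the empty initial buffer.

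The main step is a one-line substitution. Inserting $\delta_{i-1}=Y_{i-1}+r_{i-1}$ into $Y_i=\max(0,\delta_{i-1}-d_i)$ gives
\[
Y_i=\max\bigl(0,\;Y_{i-1}+r_{i-1}-d_i\bigr)=\bigl(Y_{i-1}+X_i\bigr)^+,\qquad X_i=r_{i-1}-d_i,
\]
which is exactly~\eqref{eq:lindley_std}. Only the pointwise identity $\max(0,a)=(a)^+$ and the definition of $Y$ are used. The derivation therefore holds for any realization and needs neither independence nor stationarity, as the statement claims. The dwell is then recovered as $\delta_i=Y_i+r_i$.

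The remaining clause is the remark that $X_i$ is not an i.i.d.\ increment under the fixed lattice. I will justify it from Assumption~\ref{asm:inj}. There $r_{i-1}=R_{w_{i-1}}(\varepsilon_{i-1})$ with $\varepsilon_{i-1}=(d_{i-1}-\delta_{i-2})^+$, so $X_i$ is a function of $\delta_{i-2}$, and hence of the past queue state. A single example settles it. On the event $\delta_{i-2}>d_{i-1}$ we have $\varepsilon_{i-1}=0$, so $r_{i-1}\in\{0,\Delta/2\}$. On the event $\delta_{i-2}=0$, by contrast, $r_{i-1}=R_{w_{i-1}}(d_{i-1})$ takes values spread over $[0,\Delta)$. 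The conditional law of $X_i$ therefore changes with the state, and the increments cannot be treated as an exogenous i.i.d.\ input to Loynes' theorem.

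No step here is genuinely hard. The only care needed is at the initial index, fixing $\delta_0$ and $r_0$ so that the recursion is well defined at $i=1$. It is also worth stressing that the lemma gives a structural identity and no stability conclusion, which is deferred to the drift analysis in Theorem~\ref{thm:stability}.
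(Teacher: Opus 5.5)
Your proof is correct and is the paper's own argument: substitute $\delta_{i-1}=Y_{i-1}+r_{i-1}$ into $Y_i=\max(0,\delta_{i-1}-d_i)$, using no distributional property of $(d_i,r_i)$. Two of your additions go beyond the paper and are sound: the $\delta_0=Y_0=r_0=0$ convention, and your justification of the state-dependence remark, which the paper leaves unproved. The contrast in that justification is cleaner stated through the transition kernel (given $\delta_{i-2}=x$, $r_{i-1}$ has the law of $R_{w}((d-x)^+)$), because $\{\delta_{i-2}=0\}$ is a null event for $i\ge3$ when $d$ is continuous.
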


\begin{proof}
By the definition of $Y_i$, $\delta_{i-1}=Y_{i-1}+r_{i-1}$. Substituting into
$Y_i=\max(0,\delta_{i-1}-d_i)$ gives
$Y_i=\max(0,\,Y_{i-1}+r_{i-1}-d_i)=(Y_{i-1}+X_i)^+$, i.e.~\eqref{eq:lindley_std}.
The dwell $\delta_i=Y_i+r_i$ follows from the definition. The proof uses neither
the independence nor the distribution of $r_i$.
\end{proof}

The engineering meaning: the causal QIM buffer is still a queue, with the
background IPD $d_i$ providing natural drain and the alignment cost $r_i$ a new
workload; but under a fixed lattice this workload is not an exogenous service
demand---it adapts to the effective interval. When the buffer is shallow,
$\varepsilon_i=(d_i-\delta_{i-1})^+$ inherits the background phase and $r_i$ varies
with it; when the buffer is deep, almost all packets have $\varepsilon_i=0$, the
system enters the busy state, and the average injection degenerates to
$\E_w[R_w(0)]=\Delta/4$. Thus stability is decided not by a global average phase
injection but by the net drift in the large-dwell region.

\subsection{Stability: i.i.d.\ criterion and burst robustness}\label{sec:stab_thm}

\begin{theorem}[Stability of fixed-lattice random-bit QIM]\label{thm:stability}
Under Assumptions~\ref{asm:bg} and~\ref{asm:inj}, with the background arrival
distribution non-degenerate and satisfying the usual Markov-chain small-set
condition, if
\[
\mu_d>\frac{\Delta}{4}\qquad(\rho<1),
\]
then the budget-free dwell chain~\eqref{eq:delta_update} is positive recurrent,
admits a finite unique stationary distribution $\delta_\infty$, and converges to
it from any finite initial value. If the overflow event of packet $i$ is defined
as \emph{the unconstrained exact-alignment dwell exceeds the deployment budget
$\bB$,} the steady-state tail probability is $\Prob(\delta_\infty>\bB)$; this tail
characterizes budget risk and does not enter the stability threshold itself.
Conversely, if $\mu_d<\Delta/4$, the average drift in the large-dwell region is
positive, the chain has no finite stationary distribution, and any finite budget
is breached persistently. The critical boundary $\mu_d=\Delta/4$ depends on finer
distributional structure and is not taken as a deployable operating point.
\end{theorem}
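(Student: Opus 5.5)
We treat $\{\delta_i\}$ from the budget-free recursion~\eqref{eq:delta_update} as a time-homogeneous Markov chain on $[0,\infty)$. Under Assumptions~\ref{asm:bg} and~\ref{asm:inj}, $(d_i,w_i)$ is i.i.d. and independent of $\delta_{i-1}$. Moreover $r_i=R_{w_i}((d_i-\delta_{i-1})^+)$ is a deterministic function of $(\delta_{i-1},d_i,w_i)$, so the chain is Markov even though the injections are not i.i.d. Lemma~\ref{lem:reduction} explains why we do not work through the $Y_i$ form: its increment $X_i$ is state-dependent. We therefore apply the Foster--Lyapunov criterion directly to $\delta$, using the test function $V(\delta)=\delta$.

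\textbf{Sufficiency.} We split the one-step drift according to whether the packet finds the buffer busy:
\[
\E[\delta_i-\delta_{i-1}\mid\delta_{i-1}=x]
=\E\bigl[\max(0,x-d)-x\bigr]+\E\bigl[R_w((d-x)^+)\bigr].
\]
As $x\to\infty$, the first term tends to $-\mu_d$ by monotone convergence, since $\min(d,x)\uparrow d$. For the second term, on $\{d\le x\}$ the effective interval is $0$, and $\E_w[R_w(0)]=\tfrac12\cdot0+\tfrac12\cdot\Delta/2=\Delta/4$. On $\{d>x\}$ the injection is bounded by $\Delta$, and this event has probability $\Prob(d>x)\to0$. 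Hence the drift tends to $\Delta/4-\mu_d<0$. Fix $\eta=(\mu_d-\Delta/4)/2$. There is then an $x_0$ with drift $\le-\eta$ for all $x>x_0$. On $[0,x_0]$ the drift is bounded by $\Delta$, since $\delta_i\le\delta_{i-1}+\Delta$.

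To conclude we need $C=[0,x_0]$ to be petite, which is where we use the paper's stated ``small-set condition.'' We would verify it concretely as follows. From any $x\le x_0$, the event $\{d_i>x_0\}$ has positive probability by non-degeneracy; if $d$ is bounded, we instead use several consecutive large gaps. On this event the buffer empties, and $\delta_i=R_{w_i}(d_i-x)$. If $d$ has a density component, this gives a minorization by a common nontrivial measure. Otherwise we regenerate at the atom $\{\delta=0\}$: the chain reaches $\delta=0$ whenever $d_i-x$ lands on the lattice point selected by $w_i$, and failing that we fall back on the assumed small-set condition. Foster's theorem~\cite{Foster1953,Meyn1993Markov} then gives positive Harris recurrence and a unique stationary law $\delta_\infty$. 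Aperiodicity follows from the positive probability of staying in $C$, which yields convergence from any finite initial value by the ergodic theorem for Harris chains. The overflow statement is then just the definition $\Prob(\delta_\infty>\bB)$ under that law; no further argument is needed.

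\textbf{Converse.} When $\mu_d<\Delta/4$, the same computation shows the drift is $\ge\eta'>0$ for $x>x_0$. Increments are bounded above by $\Delta$, and the negative part is controlled by $d$ with $\E d<\infty$. We would couple $\delta_i$ from below with a random walk having i.i.d. increments $-d_i+\Delta/2\cdot\indic{w_i=1}$, which is exactly the busy-state dynamics. This works because $\max(0,x-d)\ge x-d$ and $R_w(\varepsilon)$ equals $R_w(0)$ whenever $\varepsilon=0$; when $\varepsilon>0$ the buffer has emptied, so a lower bound of $0$ is harmless. After any visit above a high level, the walk has positive mean $\Delta/4-\mu_d$. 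By the SLLN it escapes to $+\infty$ with positive probability without returning below $x_0$, and with probability one eventually. Hence $\delta_i\to\infty$ a.s., so there is no stationary distribution and every finite $\bB$ is eventually exceeded forever. The main obstacle I expect is the petiteness and minorization of compact sets for general (e.g.\ lattice-valued) $d$, which is why the theorem assumes it. Lacking that, I would restrict to $d$ with an absolutely continuous component. The boundary case $\mu_d=\Delta/4$ is excluded as stated.
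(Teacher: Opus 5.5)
Your sufficiency half follows the paper's argument. You use the same test function $V(x)=x$ and obtain the same limit $\lim_{x\to\infty}D(x)=\Delta/4-\mu_d$. The paper gets this limit in one dominated-convergence step, while you split into the drain term and the events $\{d\le x\}$, $\{d>x\}$, which is equivalent. You then make the same appeal to Foster--Lyapunov, with the small-set condition taken as a hypothesis.

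You go beyond the paper in two small ways. First, you add an aperiodicity remark, which the theorem's claim of convergence from every initial value needs and which the paper's proof never addresses. To make it precise, note that the minorizing measure from an emptying step is carried by $[0,\Delta)\subset C$ (for $x_0\ge\Delta$); this gives strong aperiodicity. Second, your one-step minorization from a density component is overstated unless that component spans an interval of length at least $\Delta$. Nothing rests on this, since the condition is assumed.

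The converse is where you genuinely differ. The paper invokes a generic reverse-drift criterion. You instead compare $\delta_i$ pathwise with the unreflected busy-state walk $S_i=\delta_0+\sum_{k\le i}\bigl(R_{w_k}(0)-d_k\bigr)$ and apply the SLLN. Your route has three advantages:
\begin{itemize}
\item it is elementary;
\item it sidesteps the side conditions of transience criteria, since the downward jumps $\min(d_i,\delta_{i-1})$ are unbounded when $d$ is;
\item it yields $\delta_i\to\infty$ almost surely, which is exactly what the clause ``any finite budget is breached persistently'' requires and which the paper's proof leaves implicit.
\end{itemize}

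One justification must be repaired. On an emptying step, ``a lower bound of $0$ is harmless'' does not give $\delta_i\ge S_i$, because $S_i$ can be positive there. For example, if $S_{i-1}=\delta_{i-1}$, $w_i=1$ and $0<\varepsilon_i<\Delta/2$, then $S_i=\Delta/2-\varepsilon_i>0$. The correct reason is monotonicity of the quantizer: $q_{w_i}(\varepsilon_i)\ge q_{w_i}(0)$, i.e.\ $t'_i\ge t'_{i-1}+q_{w_i}(0)$. This gives $\delta_i\ge\delta_{i-1}-d_i+R_{w_i}(0)$ in both the busy and the emptying case. Hence $\delta_i\ge S_i$ for all $i$ from time $0$, and $S_i/i\to\Delta/4-\mu_d>0$ finishes the argument directly. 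You can then drop the ``visit above a high level, escape with positive probability, then eventually almost surely'' bootstrap, which as written would also require showing that high levels are revisited infinitely often.
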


\begin{proof}
Take $V(x)=x$. Given $\delta_{i-1}=x$, $\varepsilon_i=(d_i-x)^+$ and
$\delta_i=x-\min(d_i,x)+R_{w_i}((d_i-x)^+)$, so the one-step drift is
\[
\begin{aligned}
D(x)&=\E[\delta_i-x\mid\delta_{i-1}=x]\\
&=-\E[\min(d_i,x)]+\E\!\left[R_{w_i}\bigl((d_i-x)^+\bigr)\right].
\end{aligned}
\]
Since $0\le R_w(\cdot)<\Delta$ and $(d_i-x)^+\to0$ a.s.\ as $x\to\infty$, dominated
convergence gives $\lim_{x\to\infty}D(x)=-\mu_d+\E_w[R_w(0)]$. In the zero-phase
fixed dual lattice, $R_0(0)=0$ and $R_1(0)=\Delta/2$, so
$\E_w[R_w(0)]=\Delta/4$. If $\mu_d>\Delta/4$, there exist $x_0,\eta>0$ with
$D(x)\le-\eta$ for $x\ge x_0$; the Foster--Lyapunov drift
criterion~\cite{Foster1953,Meyn1993Markov} gives positive recurrence and a finite
stationary distribution. If $\mu_d<\Delta/4$, the large-dwell drift is positive,
and the standard reverse-drift criterion shows no finite stationary distribution
exists.

The budget part only interprets the steady-state tail: after the unconstrained
model reaches steady state, $\Prob(\delta_\infty>\bB)$ gives the probability that
the required dwell exceeds the budget. If the real system uses hard
truncation~\eqref{eq:lindley}, truncation changes the subsequent state and phase,
so its long-run truncation frequency is a tail problem of the truncated
implementation and cannot be used to change or prove the above threshold.
\end{proof}

Theorem~\ref{thm:stability} translates \emph{is embedding feasible} into a testable
engineering inequality: as long as the background mean inter-arrival exceeds the
busy-state mean injection ($\mu_d>\Delta/4$), the buffer admits a stationary
distribution and the dwell does not diverge. The contrapositive is equally strong:
when $\rho>1$ ($\Delta>4\mu_d$) the large-dwell drift is positive, the buffer grows
without bound, and QIM is unusable. The stability condition thus sets an upper
bound $\Delta<4\mu_d$. This bound comes from the mean injection $\Delta/4$ at
$\varepsilon_i=0$ under the real lattice rule, not the $\Delta/2$ of a
uniform-phase assumption; if the implementation adds per-symbol dither,
non-equiprobable coding, dynamic phase, or feedback step, the busy-state mean
injection changes with the rule and the upper bound must be recomputed from the
new drift.

Combining this stability upper end with the decoding lower end
$\Delta\ge c\sigma_\xi$ of Section~\ref{sec:decoding}, the conservative admissible
range of the step is squeezed from both sides,
\begin{equation}
\Delta\in[\,c\sigma_\xi,\ 4\mu_d\,),\qquad c=4Q^{-1}(\epsilon/2).
\label{eq:window}
\end{equation}
The lower end $c\sigma_\xi$ is the decodability floor (keeping the single-symbol
error rate below $\epsilon$), the upper end $4\mu_d$ the queue-stability ceiling of
fixed-lattice random-bit QIM; both depend only on measurable deployment
quantities---channel jitter $\sigma_\xi$ and background inter-arrival $\mu_d$. The
window is nonempty iff $\sigma_\xi<4\mu_d/c$; when jitter is too large or the
background too dense so it is empty ($\sigma_\xi\ge4\mu_d/c$), a fixed step cannot
meet both ends, and the problem should turn to online estimation, adaptive step,
or feedback control rather than tuning a single fixed $\Delta$.

The above assumes i.i.d.\ background IPDs; real traffic is bursty and
autocorrelated---SSH switching from interactive to bulk transfer, or the first
packet after a video re-buffering, makes the local inter-arrival drop sharply, so
the instantaneous intensity $\rho(J)=\Delta/(4\mu_d(J))$ exceeds $1$ in some
intervals. Whether stability survives such input is exactly its robustness. Since
the large-dwell drift of fixed-lattice QIM depends only on the busy injection
$\Delta/4$ and the time-average background drain, the same criterion extends to
stationary-ergodic input.

\begin{theorem}[Robust stability]\label{thm:robust}
Let the background $\{d_i\}$ be modulated by an irreducible aperiodic finite-state
Markov chain $\{J_i\}$, with $\E[d_i\mid J_i=j]=\mu_d(j)<\infty$; the watermark
bits satisfy Assumption~\ref{asm:inj}, and the joint chain $(\delta_i,J_i)$
satisfies the same irreducibility and small-set conditions as in
Theorem~\ref{thm:stability}. Write $\bar\mu_d=\E_\pi[d]=\sum_j\pi_j\mu_d(j)$ ($\pi$
the stationary distribution) and $\bar\rho=\Delta/(4\bar\mu_d)$. If $\bar\rho<1$
(equivalently $\bar\mu_d>\Delta/4$), the budget-free dwell chain admits a finite
stationary distribution. In particular, even if some modulation states are
instantaneously overloaded ($\rho(J_i)=\Delta/(4\mu_d(J_i))>1$, dwell
accumulating), the buffer is globally stable as long as $\bar\rho<1$; if
$\bar\rho>1$, the large-dwell average drift is positive and the system is
unstable.
\end{theorem}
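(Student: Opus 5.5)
The plan is to adapt the Foster--Lyapunov argument of Theorem~\ref{thm:stability} to the joint chain $(\delta_i,J_i)$. A single-step drift will not work, because in an overloaded state $j$ with $\mu_d(j)<\Delta/4$ the conditional drift at large dwell is positive. I will therefore correct the Lyapunov function with a solution of the Poisson equation for the modulating chain. This is the standard Markov-additive device of~\cite{Asmussen2003}.

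\textbf{Step 1: large-dwell drift per state.} Given $\delta_{i-1}=x$ and $J_i=j$, the computation of Theorem~\ref{thm:stability} repeats verbatim:
\[
D_j(x)=-\E[\min(d_i,x)\mid J_i=j]+\E\bigl[R_{w_i}((d_i-x)^+)\mid J_i=j\bigr].
\]
By dominated convergence, $D_j(x)\to g(j)\triangleq\Delta/4-\mu_d(j)$ as $x\to\infty$. The convergence is uniform in $j$ because the state space is finite. Note that $\sum_j\pi_j g(j)=\Delta/4-\bar\mu_d<0$ exactly when $\bar\rho<1$.

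\textbf{Step 2: Poisson correction.} Let $P$ be the transition matrix of $\{J_i\}$. Irreducibility and finiteness give a bounded solution $h$ of
\[
h-Ph=g-\pi g,
\]
for instance $h=\sum_{n\ge0}P^n(g-\pi g)$, which converges by aperiodicity. Take
\[
V(x,j)=x+h(j)+C,
\]
with $C$ chosen so that $V\ge0$. The drift of $V$ from the state $(x,j)$, where $j$ is the regime governing the next step (indexing conventions to be fixed), is
\[
D_j(x)+(Ph)(j)-h(j)=D_j(x)-g(j)+\pi g .
\]
Setting $\eta=-\pi g/2>0$, Step 1 gives $x_0$ such that this drift is at most $-\eta$ for all $x\ge x_0$ and all $j$. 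On the set $\{x<x_0\}\times E$ the drift is bounded, since $R<\Delta$ and $h$ is bounded. That set is petite by the assumed small-set condition, so the Foster--Lyapunov criterion~\cite{Foster1953,Meyn1993Markov} yields positive recurrence and a finite stationary law.

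The timing convention is the main technical obstacle. I must check that $d_i$ is conditioned on $J_i$, with $J_i$ drawn from $J_{i-1}$, so that the Poisson increment aligns with the regime that produces $d_i$. If it does not align, I will use $V(x,j)=x+(Ph)(j)$, or take the drift over one extra step, which only shifts indices. Overloaded states with $g(j)>0$ cause no trouble: the correction $h$ absorbs them, which is precisely the content of ``instantaneous overload is allowed.''

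\textbf{Step 3: instability when $\bar\rho>1$.} Here $\pi g>0$, and the same $V$ has drift at least $\pi g/2>0$ for $x\ge x_0$, with increments bounded below by $-d_i$. I would prefer a direct argument to a reverse-drift theorem. By the recursion~\eqref{eq:delta_update}, $\delta_n\ge\delta_0+\sum_{i\le n}(r_i-d_i)$ as long as the buffer stays busy. On excursions above $x_0$, $r_i=R_{w_i}(0)$ is i.i.d.\ with mean $\Delta/4$ up to vanishing corrections. Birkhoff's ergodic theorem~\cite{Durrett2019} applied to the stationary $(w_i,d_i)$ then gives $\sum_{i\le n}(R_{w_i}(0)-d_i)/n\to\Delta/4-\bar\mu_d>0$. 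Hence, with positive probability, $\delta_n\to\infty$ linearly.

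To conclude that no stationary distribution exists, I will compare with the busy-state walk. Started under a putative stationary law, the chain would exceed any level with positive probability, and from there the walk escapes to $\infty$ with positive probability, which contradicts tightness. This comparison is where I expect care to be needed, because of the state-dependence of $r_i$ near $x_0$.
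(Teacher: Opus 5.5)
Your proposal is correct in outline, and the sufficiency direction is the paper's argument. Both of you add to $V=\delta$ a bounded solution of the Poisson equation of the modulating chain. This makes the limiting large-dwell drift the regime-independent quantity $\Delta/4-\bar\mu_d$, and Foster--Lyapunov with the assumed small set then finishes. The paper settles the indexing question you flag by keeping the state $(\delta_{i-1},J_{i-1})$. Its limiting drain is then $m(j)=\sum_k P_{jk}\mu_d(k)$, and it solves the Poisson equation with right-hand side $\bar\mu_d-m$. In your notation this is exactly your fallback $V(x,j)=x+(Ph)(j)$: its correction $(P^2h-Ph)(j)=\pi g-(Pg)(j)$ cancels the limit $(Pg)(j)$. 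Use that version, since the theorem's irreducibility and small-set hypotheses are stated for $(\delta_i,J_i)$, not for $(\delta_{i-1},J_i)$.

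For the converse the routes differ. The paper simply invokes a reverse-drift criterion, while you argue pathwise. Your escape step can be made exact. With $S_n=\sum_{l\le n}(R_{w_l}(0)-d_l)$, on the event $\{x+S_{i-1}>d_i\ \forall i\ge1\}$ every step is busy and $\delta_n=x+S_n$ exactly. Since $S_n/n\to\Delta/4-\bar\mu_d>0$ and $d_n/n\to0$ a.s.\ under each initial regime (because $\pi_j>0$), this event has probability at least $1/2$ once $x>M$ for suitable $M$.

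What is missing is the claim that a putative stationary law $\nu$ charges $\{\delta>M\}$. This does not follow from a comparison argument, because the chain is not monotone in $\delta_{i-1}$: off the busy set the new dwell $R_{w_i}(d_i-\delta_{i-1})$ is a sawtooth in $\delta_{i-1}$.

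A short fix: for every $y\ge0$, $\tfrac12\bigl(R_0(y)+R_1(y)\bigr)$ equals $\Delta/4$ if $y\in\tfrac{\Delta}{2}\Nz$, and equals $\tfrac{3\Delta}{4}-\psi$ with $\psi=y\bmod\tfrac{\Delta}{2}\in(0,\tfrac{\Delta}{2})$ otherwise. So it is never below $\Delta/4$. Since $w_i$ is independent of $(d_i,\delta_{i-1})$, your corrected drift is at least $\Delta/4-\bar\mu_d>0$ at \emph{every} state, not just for $x\ge x_0$. Hence $\E[\delta_n]\ge\delta_0+n(\Delta/4-\bar\mu_d)-2\sup|h|$ from any start.

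If $\nu(\delta\le K)=1$ for some $K$, stationarity would force $\E_\nu[\delta_n]\le K$ for all $n$, which is impossible. So $\nu(\delta>M)>0$, and therefore $\Prob_\nu(\delta_n\to\infty)\ge\tfrac12\nu(\delta>M)>0$. Since $\Prob_\nu(\delta_n>K)=\nu(\delta>K)$ for all $n$, Fatou gives $\nu(\delta>K)\ge\Prob_\nu(\delta_n\to\infty)$ for every $K$, which is impossible as $K\to\infty$. This completes your tightness argument.

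The same reachability of high dwell (the high-dwell set having positive irreducibility measure) is also a hypothesis of the usual non-positivity drift criteria. So the paper's one-line appeal implicitly relies on it too, and the observation above serves either route.
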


\begin{proof}
Consider the joint chain $(\delta_i,J_i)$. Write the one-step expected drain
$m(j)\triangleq\E[d_i\mid J_{i-1}=j]=\sum_k P_{jk}\mu_d(k)$ ($P$ the transition
matrix); by $\pi P=\pi$, $\sum_j\pi_j m(j)=\bar\mu_d$. Given
$(\delta_{i-1},J_{i-1})=(x,j)$, as in Theorem~\ref{thm:stability}
$\delta_i-x=-\min(d_i,x)+R_{w_i}((d_i-x)^+)$, so the one-step drift
\[
\begin{aligned}
D(x,j)={}&-\E[\min(d_i,x)\mid j]\\
&+\E\!\left[R_{w_i}\bigl((d_i-x)^+\bigr)\,\middle|\,j\right]
\end{aligned}
\]
satisfies $\lim_{x\to\infty}D(x,j)=\Delta/4-m(j)$ (drain by monotone convergence to
$m(j)$, injection by dominated convergence to $\E_w[R_w(0)]=\Delta/4$). In an
overloaded state ($m(j)<\Delta/4$) this limit is positive, so the one-step drift of
$V=\delta$ is not uniformly negative at large $\delta$ and Foster--Lyapunov does
not apply directly---the technical crux of tolerating instantaneous overload.

To remove the state-dependence of the drain, introduce a correction. Since
$\{J_i\}$ is finite, irreducible, aperiodic and $\bar\mu_d-m$ has zero $\pi$-mean,
the Poisson equation $g(j)-\sum_k P_{jk}g(k)=\bar\mu_d-m(j)$ has a bounded solution
$g$ (on the zero-$\pi$-mean subspace where $I-P$ is invertible). Take
$V(\delta,j)=\delta+g(j)$; its one-step drift is
\[
\begin{aligned}
\E\bigl[V(\delta_i,J_i)&-V(\delta_{i-1},J_{i-1})\mid x,j\bigr]\\
&=D(x,j)+\Bigl(\textstyle\sum_k P_{jk}g(k)-g(j)\Bigr)\\
&=D(x,j)+\bigl(m(j)-\bar\mu_d\bigr),
\end{aligned}
\]
using the Poisson equation. Letting $x\to\infty$, the state-dependent $m(j)$ is
cancelled by the drift of $g$:
\[
\begin{aligned}
\lim_{x\to\infty}\E[\Delta V\mid x,j]
&=\bigl(\tfrac{\Delta}{4}-m(j)\bigr)+\bigl(m(j)-\bar\mu_d\bigr)\\
&=\tfrac{\Delta}{4}-\bar\mu_d,
\end{aligned}
\]
independent of $j$. If $\bar\mu_d>\Delta/4$, there exist $x_0,\eta>0$ with this
drift $\le-\eta$ for all $j$ and $\delta\ge x_0$; since $g$ is bounded,
$\{(\delta,j):\delta\le x_0\}$ is a small set, and the Foster--Lyapunov
criterion~\cite{Meyn1993Markov} gives positive recurrence of the joint chain and a
finite stationary distribution of $\delta_i$. The correction $g(J)$ uses the
modulation's own fluctuation to cancel the deviation of the drain $m(J)$ around
$\bar\mu_d$, so overloaded states are compensated by slow states in the
time-average sense---the rigorous source of \emph{stability decided by the
time-average drift $\Delta/4-\bar\mu_d$.} Conversely if $\bar\mu_d<\Delta/4$, the
limit is positive, $V$ has uniformly positive drift at large $\delta$, and the
reverse-drift criterion shows no finite stationary distribution.
\end{proof}

Theorem~\ref{thm:robust} shows buffer stability is decided by the time-average
intensity, not the instantaneous one. Even with whole intervals of instantaneous
overload ($\rho(J)>1$, dwell accumulating there), the queue does not diverge as
long as the slow intervals drain the buffer on average ($\bar\rho<1$). This
extends the i.i.d.\ condition $\mu_d>\Delta/4$ to $\bar\mu_d>\Delta/4$ for
correlated, bursty traffic. Correspondingly the window's upper end generalizes
from $4\mu_d$ to the time-average $4\bar\mu_d$, i.e.\
$\Delta\in[c\sigma_\xi,4\bar\mu_d)$; the form is unchanged, only the ceiling now
set by the time-average inter-arrival. Section~\ref{sec:numerical} first checks
the sharpness of this transition on synthetic backgrounds, then examines the
position of $\Delta$ relative to the empirical ceiling $4\hat\mu_d$ on real
application-level IPDs, observing whether the steady dwell on continuous segments
stays finite.

\section{Experiments}\label{sec:numerical}

The aim here is not a full flow-watermarking system evaluation but to test how the
stability criterion of Section~\ref{sec:stability} behaves in two kinds of
evidence. Section~\ref{sec:num_synthetic} is the synthetic-data experiment: on
controlled backgrounds we check, one by one, the recursion identity, the
phase-transition critical point and its shape-independence, the near-critical
scaling, burst-robust stability, and the decoding floor and step-size window,
confirming that the numerical implementation matches the theoretical mechanism.
Section~\ref{sec:num_realtraffic} is the real-data experiment and the main body of
this section: without assuming i.i.d.\ or finite-state Markov traffic, we take
pcap-extracted application-level IPD continuous traces as background, first
checking in the safe region whether the stability margin is positive and the
steady dwell bounded, then sweeping the step along the real background to approach
and cross the stability boundary $\rho=1$, to check that the criterion holds in the
safe region and correctly locates the divergence critical point.

All queue-side experiments run on the budget-free dwell recursion
(Eq.~\eqref{eq:delta_update}) with no budget cap, because stability only asks
whether the unconstrained exact-alignment dwell diverges: estimating the steady
dwell $\E[\delta_\infty]$ directly on this recursion cleanly separates a bounded
steady state from unbounded growth, and $\bB$ is only a threshold for overflow
risk after stability holds, not part of the stability-threshold estimate. The
design parameter under study is the step $\Delta$, whose admissible range is
squeezed by the stability upper bound and the decoding lower bound.

\subsection{Synthetic-data experiments}\label{sec:num_synthetic}

These experiments place the theoretical objects of Section~\ref{sec:stability}
under controlled conditions and check them one by one: confirming the numerical
implementation and the quantitative form of each conclusion. The checks use a
lognormal IPD as baseline background (the distribution-family sweep, two-state
Markov modulation, and non-ideal-jitter checks vary the background or jitter as
needed): lognormal has positive support and a right-skewed heavy tail, the queue
recursion solves stably on it, and it suits a systematic sweep of the criterion's
parameter region. What is needed is only a controllable distribution and stable
computation, not that lognormal be the unique model of real wide-area traffic;
real IPDs are often fitted by lognormal, Weibull, or gamma heavy-tailed
distributions~\cite{Arfeen2019Weibull}, and below we verify that the divergence
critical point is independent of the distribution family, so this choice does not
sway the criterion. The baseline takes mean $\mu_d=20$\,ms and coefficient of
variation $\mathrm{CV}=0.5$ (std $\sigma_d=10$\,ms); watermark bits are drawn
i.i.d.\ equiprobably, and the injection is computed by the fixed dual-lattice QIM
rule~\eqref{eq:qim_rule} from the effective interval $\varepsilon_i$ and bit
$w_i$, with no per-symbol dither and $r_i$ not preset to a uniform distribution.
Each check varies the relevant parameter---CV, intensity $\rho=\Delta/(4\mu_d)$,
two-state Markov modulation, and channel jitter $\sigma_\xi$---with values given in
place. Steady quantities are estimated from $2\times10^3$--$8\times10^3$
independent chains of $5\times10^3$--$1.2\times10^4$ packets each (dropping a
warm-up), lengthening chains near the critical point; full parameters are in the
released scripts.

These controlled experiments only confirm the numerical implementation and the
quantitative form of each theoretical conclusion, and are not an engineering
validation independent of the real-data evidence of
Section~\ref{sec:num_realtraffic}. The per-theorem checks are collected in
Figure~\ref{fig:synthetic}, the window and non-ideal-jitter checks in
Figure~\ref{fig:window}.

\subsubsection{Recursion identity and the i.i.d.\ stability transition}

On a stable configuration $\Delta=24$\,ms ($\rho=\Delta/(4\mu_d)=0.30$) we
simulate the fixed-lattice random-bit recursion and extract the reduced sequence
by $Y_i=\delta_i-r_i$: the steady samples give
$\E[\delta_\infty]-\E[Y_\infty]=10.71$\,ms, matching the empirical mean injection
$\E[r_i]=10.71$\,ms and confirming $\delta_i=Y_i+r_i$ as a path-wise identity.
Meanwhile the sample correlation of $Y_i$ and $r_i$ is $-0.306$, showing that
under the fixed lattice the injection is strongly coupled to the queue state and
cannot be written as a convolution of an independent $Y$ and an independent
uniform $r$. On the busy subsample ($\varepsilon_i=0$): the busy fraction is about
$0.277$ and its empirical mean injection is $6.00$\,ms, exactly the theoretical
$\Delta/4=6$\,ms. This confirms the key fact of the derivation: the reduction is
only a recursion identity, and the stability threshold is set by the deep-buffer
busy-state mean injection $\Delta/4$.

Figure~\ref{fig:synthetic}(a) sweeps $\rho=\Delta/(4\mu_d)$ over four same-mean
($\mu_d=20$\,ms) backgrounds from different families---lognormal (CV $=0.5$,
$1.0$), Weibull ($k=1.2$), gamma ($k=2$): the four curves are uniformly bounded
for $\rho<1$, and the divergence knee lands sharply at $\rho=1$ regardless of
family or fine shape, a direct manifestation of Theorem~\ref{thm:stability}'s
critical point set only by the mean. The dwell before the knee rises with a
heavier tail (at $\rho=0.9$, from $132$\,ms for lognormal CV$0.5$ to $194$\,ms for
CV$1.0$), but the divergence position does not move. Thus using a lognormal
background does not sway the conclusion: real IPDs are often heavy-tailed and
markedly non-Poisson~\cite{Paxson1995WideArea}, fitted by lognormal, Weibull, or
gamma~\cite{Arfeen2019Weibull}, and the criterion sees only the mean; the
real-data evidence (Section~\ref{sec:num_realtraffic}) uses measured IPDs
directly, free of any parametric distribution.

Figure~\ref{fig:synthetic}(b) depicts the divergence rate near the upper bound:
the steady dwell $\E[\delta_\infty]$ is approximately linear in $1/(1-\rho)$
(linear fit $R^2=0.999$), i.e.\ it diverges as the reciprocal of the margin
$1-\rho$. This gives the quantitative cost near the bound---each step of $\Delta$
toward $4\bar\mu_d$ scales the steady dwell by $1/(1-\rho)$; so a real deployment
should not set $\Delta$ near the pooled boundary but keep a margin commensurate
with traffic heterogeneity before $\rho=1$.

\subsubsection{Burst-robust stability}

Figure~\ref{fig:synthetic}(c) uses a two-state Markov background (slow state
$\mu_A=36$, burst state $\mu_B=4$\,ms, symmetric transitions making the
time-average $\bar\mu_d=20$\,ms) and sweeps the time-average intensity
$\bar\rho=\Delta/(4\bar\mu_d)$: although the burst-state instantaneous intensity
$\rho_B$ reaches $4.75\times$ overload, as long as $\bar\rho<1$ the steady dwell is
bounded ($\E[\delta_\infty]=553$\,ms at $\bar\rho=0.95$), and the knee still lands
exactly at $\bar\rho=1$, not $\rho_B=1$. In particular, $\Delta=24$\,ms
($\bar\rho=0.30$, burst state $\rho_B=1.5$ overloaded, half the time in that state)
gives $\E[\delta_\infty]=17.6$\,ms---half the time instantaneously overloaded yet
the queue stable, exactly Theorem~\ref{thm:robust}'s \emph{stability decided by the
time-average drift.} From the proof, this half-time overload corresponds to a
positive one-step drift $\Delta/4-m(J)>0$ in the burst state (here
$\Delta/4=6>\mu_B=4$), which the correction $g(J)$ of Theorem~\ref{thm:robust}
averages away over the modulation cycle, so the joint chain's effective drift
converges to the state-independent $\Delta/4-\bar\mu_d<0$; the knee locked at
$\bar\rho=1$ rather than $\rho_B=1$ is the numerical evidence of this cancellation.

To generalize \emph{the time-average is the only decider} from a single
configuration, fix $\bar\rho=0.80$ and vary only the burst structure: mild burst
($\rho_B=1$), strong burst ($\rho_B=6$), and strong-persistent burst ($\rho_B=6$,
stay probability $0.9$) give steady dwells $13.2$, $27.0$, $62.8$\,ms
respectively, all bounded. The stronger and more persistent the burst, the larger
the dwell, but stability itself is unaffected---as long as the time-average drift
is negative the queue is stable, the burst structure changing only the magnitude
of the dwell, not whether it diverges.

\subsubsection{Decoding floor and the step-size operating window}

Figure~\ref{fig:synthetic}(d) checks the decoding-side sufficient lower bound.
Under Gaussian jitter we Monte-Carlo the single-symbol error rate $p(\gamma)$ of
the QIM nearest-lattice hard decision ($\gamma=\Delta/\sigma_\xi$) against the
sufficient bound $2Q(\gamma/4)$: for $\gamma\gtrsim5$ the two nearly coincide (the
measured-to-bound ratio is near $1$, at most $1.03$, within Monte-Carlo error),
so the sufficient bound is nearly tight in the working region and not a
conservative overestimate; only for $\gamma\lesssim4$ (noise comparable to the
step, modulo wraparound significant) is the measurement clearly below the bound.
At the design threshold $c(\epsilon)=4Q^{-1}(\epsilon/2)$ the measured error rate
matches the target $\epsilon$ (at $c(10^{-2})=10.30$ and $c(10^{-3})=13.16$ the
measured $p/\epsilon$ ratios are $1.004$ and $1.005$, within Monte-Carlo error),
confirming that the floor $\Delta\ge c\sigma_\xi$ is both sufficient and tight
under the ideal channel model.

\begin{figure*}[t]
\centering
\includegraphics[width=0.98\textwidth]{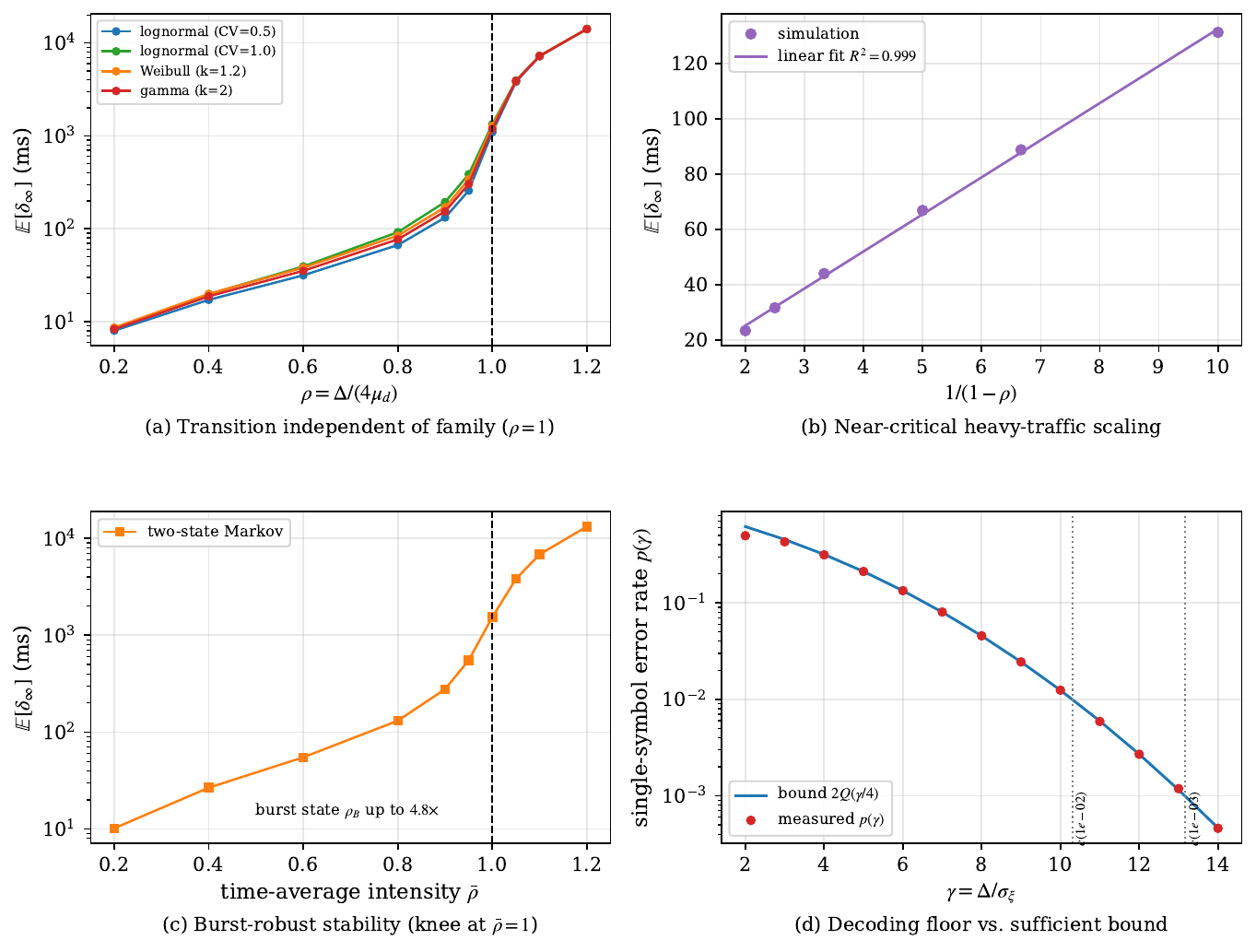}
\caption{Per-theorem fine-grained checks on synthetic backgrounds.}
\label{fig:synthetic}
\end{figure*}

Once the stability upper end and decoding lower end are separately verified,
Figure~\ref{fig:window}(a) puts both on the same $\Delta$ axis and validates the
window $[c\sigma_\xi,4\mu_d)$ as a whole. With $\mu_d=20$\,ms, jitter
$\sigma_\xi=2$\,ms, target $\epsilon=10^{-2}$ ($c=10.3$, so lower end
$c\sigma_\xi=20.6$\,ms, upper end $4\mu_d=80$\,ms), we estimate along $\Delta$ both
the single-symbol error rate and the steady dwell. Three regions are clear: for
$\Delta<20.6$\,ms the error rate exceeds $\epsilon$ ($\Delta=12$ gives $p=0.13$),
undecodable; for $\Delta\ge80$\,ms the steady dwell enters the critical or
divergent region ($\Delta=80$ already at $1166$\,ms, $\Delta=84$ rising to
$4280$\,ms and exploding further); only inside $[20.6,80)$\,ms (e.g.\ $\Delta=24$,
$p=2.7\times10^{-3}$, $\E[\delta_\infty]=12.3$\,ms; $\Delta=40$,
$p=5.7\times10^{-7}$, $\E[\delta_\infty]=23.4$\,ms) is the watermark both decodable
and the queue stable. This two-sided squeeze is the complete characterization of
stable embeddability along the step; when jitter grows so that
$c\sigma_\xi\ge4\mu_d$ the window is empty and a fixed step can no longer meet both
constraints.

The lower end $\Delta\ge c\sigma_\xi$ rests on i.i.d.\ Gaussian jitter; heavy-tailed
real jitter makes the floor optimistic. Figure~\ref{fig:window}(b) confirms this:
under variance-matched Gaussian, Laplace, and Student-$t_3$ (heavy-tailed) jitter
we measure the single-symbol error rate. The Gaussian case matches the theoretical
$c$ ($10.31$ vs.\ $10.30$ at $\epsilon=10^{-2}$); but heavy-tailed jitter needs a
markedly larger $c$ for the same $\epsilon$---at $\epsilon=10^{-2}$, Laplace and
$t_3$ rise to $13.0$ and $13.3$ (about $26\%$--$29\%$ over Gaussian), at
$\epsilon=10^{-3}$ Laplace rises to $19.6$ ($49\%$ over), and $t_3$, with too heavy
a tail, cannot push the single-symbol error below $10^{-3}$ within $\gamma\le20$.
Thus the single-symbol error rate is set only by the marginal tail of the jitter,
a heavy tail makes $2Q(\gamma/4)$ systematically underestimate, and the floor must
be recalibrated to a larger $c$ from the measured jitter distribution; the
temporal correlation of jitter affects only multi-symbol/frame decoding, not the
single-symbol floor.

\begin{figure*}[t]
\centering
\includegraphics[width=0.98\textwidth]{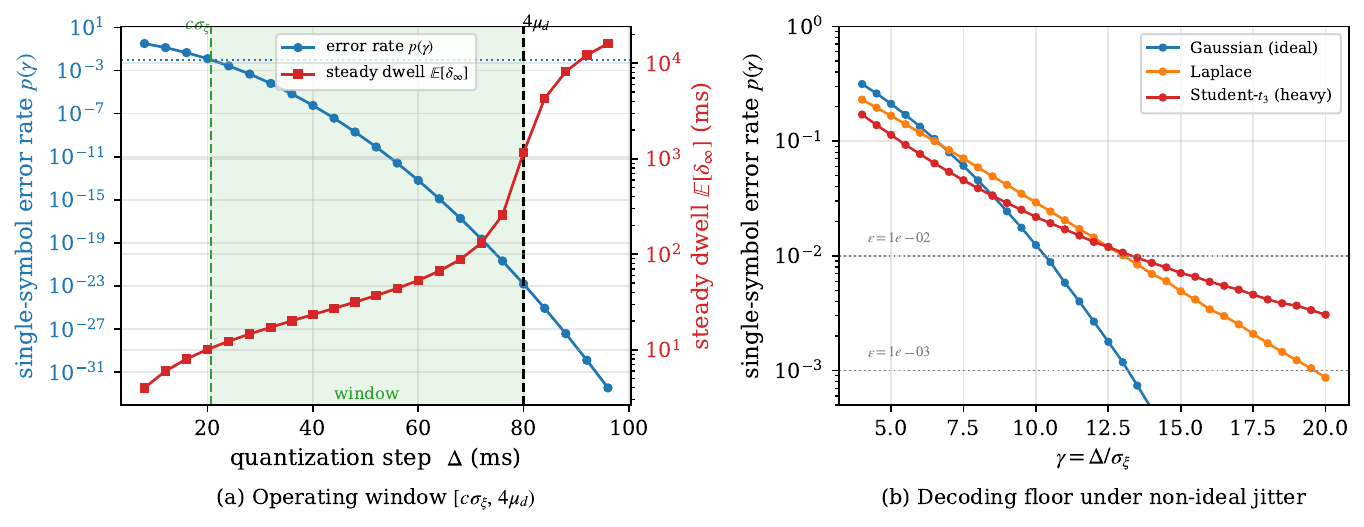}
\caption{The quantization-step operating window and the decoding floor under
non-ideal jitter.}\label{fig:window}
\end{figure*}

\subsection{Real-data experiments}\label{sec:num_realtraffic}

This section checks the criterion on real traffic in two steps: first in the safe
region ($\Delta\in\{8,12,16\}$\,ms) whether the stability margin is positive and
the steady dwell bounded, then sweeping the step along the real background to
approach and cross the boundary $\rho=1$ to see whether the criterion correctly
locates the divergence critical point. In both steps every simulated chain's window
is drawn from within a single candidate flow, never spliced across flows (see the
extraction protocol below).

Real IPDs are taken from two encrypted-traffic datasets released by the Canadian
Institute for Cybersecurity (CIC/UNB): ISCXTor2016~\cite{Lashkari2017Tor} and
ISCXVPN2016~\cite{DraperGil2016VPN}. Both collect per-packet pcaps of browsing,
chat, streaming audio/video, file transfer, VoIP, and mail/P2P in a controlled
environment, covering Tor-anonymized, VPN-tunneled, and non-anonymized (NonTor)
transport; Tor and NonTor come from ISCXTor2016, VPN from ISCXVPN2016.

We extract the per-packet unidirectional in-flow IPD sequence, not flow-level means
or session-level statistics: within each application we recover the packet arrival
times inside a single unidirectional flow, take adjacent-arrival differences as
millisecond IPDs, and restrict to $[1,500]$\,ms to remove sub-millisecond
capture/batch-send effects and over-long session idles. The simulation unit is a
single real flow: each chain's whole window comes from a contiguous segment inside
one real flow, never spliced across flows---the recursion state $\delta_{i-1}$ has
no physical meaning at a flow boundary, and concatenating IPDs of different
connections would inject a spurious correlation at the splice. Accordingly we keep
only unidirectional flows of length at least $L_{\min}=5000$ IPDs after filtering
($5000$ being the maximum window used later), and each chain's random start and
window are confined to one candidate flow, flows sampled by length so each real
IPD sample is chosen with roughly equal probability.

To avoid mixing very different services into one averaged sample, we choose four
representative application-level scenarios: Tor browsing (anonymous web access),
Tor file-transfer (bulk transfer under anonymization), and VPN audio (continuous
media in a tunnel). The fourth would ideally be NonTor web access as an
un-anonymized baseline, but the NonTor web flows passing the $[1,500]$\,ms filter
are mostly very short: of $1496$ candidate flows the longest has only $1102$ IPDs,
none reaching $L_{\min}=5000$, so it cannot support window simulation without
cross-flow splicing; we therefore use the length-sufficient NonTor audio of the
same dataset as the un-anonymized reference channel. Table~\ref{tab:ipd_extract}
lists the candidate and retained flow statistics (Tor2016, VPN2016 abbreviate
ISCXTor2016, ISCXVPN2016).

\begin{table*}[t]
\centering
\caption{Application-level IPD data extracted by flow boundary and used in the
experiments.}\label{tab:ipd_extract}
\begin{tabular}{@{}llrrrr@{}}
\toprule
Scenario & Source & Candidate & Retained & Total IPDs & Flow-length range \\
\midrule
Tor browsing & Tor2016 & 26 & 12 & 248{,}706 & 5{,}923--44{,}269 \\
Tor file-transfer & Tor2016 & 8 & 6 & 445{,}635 & 39{,}661--127{,}968 \\
VPN audio & VPN2016 & 47 & 6 & 1{,}122{,}044 & 184{,}298--193{,}125 \\
NonTor audio & Tor2016 & 95 & 24 & 834{,}591 & 5{,}421--121{,}579 \\
\bottomrule
\end{tabular}
\end{table*}

\emph{Retained} is the number of candidate flows reaching $L_{\min}$; \emph{Total IPDs}
is the concatenated sample count of those flows (with in-flow offset records that
forbid cross-flow windows). The data thus keeps the in-flow burstiness, short-range
correlation, and heavy-tailed marginal of each real flow rather than compressing
them into one mean per flow, and the offset records ensure every simulation window
falls entirely within a single connection.

Table~\ref{tab:dataset} gives the retained-flow count and basic statistics of the
four scenarios: they consist of $12$, $6$, $6$, $24$ retained flows (\emph{Samples}
is the total in-flow IPD count), and all real-data experiments below run on these
flows with each chain strictly within a single flow. Lag-1 autocorrelation is
computed in-flow, dropping cross-boundary adjacent pairs. The means span
$19$--$40$\,ms. VPN audio has the lowest CV ($0.24$) but extreme kurtosis
($1869.5$), i.e.\ a near-constant media interval with a few far-above-mean outlier
subsequences; the other three have CV $1.5$--$1.8$, a more pronounced right-skewed
heavy tail. Lag-1 autocorrelation spans widely, from $0.02$ for NonTor audio
(almost no residual predictable structure) to $0.37$ for Tor browsing, covering
near-memoryless to markedly autocorrelated real structure---a representative sample
for testing the criterion across correlation strengths.

\begin{table*}[t]
\centering
\caption{Basic statistics of the application-level real IPD data.}\label{tab:dataset}
\begin{tabular}{@{}lrrrrrrrr@{}}
\toprule
Scenario & Retained & Samples & Mean & Median & Std & CV & Kurtosis & Lag-1 AC \\
 & & & (ms) & (ms) & (ms) & & & $\hat\rho_1$ \\
\midrule
Tor browsing       & $12$ & $248{,}706$   & $39.61$ & $10.38$ & $70.29$ & $1.77$ & $15.3$ & $0.37$ \\
Tor file-transfer  & $6$  & $445{,}635$   & $23.08$ & $3.16$  & $35.52$ & $1.54$ & $35.7$ & $0.16$ \\
VPN audio          & $6$  & $1{,}122{,}044$ & $19.25$ & $20.23$ & $4.67$  & $0.24$ & $1869.5$ & $0.35$ \\
NonTor audio       & $24$ & $834{,}591$   & $35.65$ & $30.87$ & $23.64$ & $0.66$ & $9.5$  & $0.02$ \\
\bottomrule
\end{tabular}
\end{table*}

The stability-margin experiment takes three representative steps
$\Delta\in\{8,12,16\}$\,ms, directly showing robustness to step choice. The
criterion depends only on the dimensionless $\rho=\Delta/(4\mu_d)$, not the
absolute $\Delta$: these steps give $\rho$ over $0.05$--$0.21$ for the four
scenarios, all inside the ceilings $4\hat\mu_d$ ($77$--$158$\,ms). Larger $\Delta$
(up to crossing the ceiling) is covered by the synthetic experiment
(Section~\ref{sec:num_synthetic}) and the boundary stress test
(Section~\ref{sec:num_boundary}). In the real continuous-trace experiments, each
group's $6000$ chains are spread over all retained flows of the scenario
(Table~\ref{tab:dataset}): each window lies wholly within one flow, with a
length-weighted random start advanced in original order to keep local burstiness
and autocorrelation, $5000$ packets each (first $1000$ a warm-up); the steady mean
dwell $\E[\delta_\infty]$ is estimated to check whether it stays finite. All
computation and plotting scripts are released.

\subsubsection{Stability margins}

These four backgrounds are far from i.i.d.---lag-1 autocorrelation $\hat\rho_1$
from $0.02$ to $0.37$, with in-flow bursty heavy tails
(Table~\ref{tab:dataset})---exactly the stationary-ergodic, burst-correlated
backgrounds Theorem~\ref{thm:robust} characterizes, so this subsection is also a
real-data check of Theorem~\ref{thm:robust}, complementary to the synthetic Markov
check of Section~\ref{sec:num_synthetic}. Table~\ref{tab:real_tail} gives results
at three steps $\Delta\in\{8,12,16\}$\,ms: the intensity $\rho=\Delta/(4\hat\mu_d)$
spans $0.05$--$0.21$, all inside $(0,1)$; the steady mean dwell $\E[\delta_\infty]$
is always finite and increases monotonically with $\Delta$. Thus, under the
burstiness and autocorrelation of real application flows, the criterion robustly
gives the correct engineering direction: as long as the time-average drift keeps
enough margin, the buffer does not diverge from short dense bursts. The stronger
the autocorrelation or right-skew, the faster the dwell grows with $\Delta$: Tor
browsing rises from $4.45$\,ms at $\Delta=8$ to $11.32$\,ms at $\Delta=16$, Tor
file-transfer from $4.89$ to $11.59$\,ms; the near-symmetric VPN audio rises only
from $4.00$ to $8.05$\,ms. This foreshadows Section~\ref{sec:num_boundary}: the
pooled-mean ceiling is optimistic for strongly skewed, dispersed-intensity
services, so configuration should keep a margin below the boundary.

\begin{table*}[t]
\centering
\caption{Stability margins of real continuous traces at several quantization steps
($\Delta\in\{8,12,16\}$\,ms).}\label{tab:real_tail}
\begin{tabular}{@{}lrcccccc@{}}
\toprule
Scenario & Ceiling & \multicolumn{3}{c}{$\rho=\Delta/(4\hat\mu_d)$} & \multicolumn{3}{c}{$\E[\delta_\infty]$ (ms)} \\
\cmidrule(lr){3-5}\cmidrule(lr){6-8}
 & $4\hat\mu_d$ (ms) & $\Delta{=}8$ & $12$ & $16$ & $\Delta{=}8$ & $12$ & $16$ \\
\midrule
Tor browsing       & $158.45$ & $0.050$ & $0.076$ & $0.101$ & $4.45$ & $7.31$ & $11.32$ \\
Tor file-transfer  & $92.31$  & $0.087$ & $0.130$ & $0.173$ & $4.89$ & $8.07$ & $11.59$ \\
VPN audio          & $76.99$  & $0.104$ & $0.156$ & $0.208$ & $4.00$ & $6.03$ & $8.05$  \\
NonTor audio       & $142.62$ & $0.056$ & $0.084$ & $0.112$ & $4.04$ & $6.09$ & $8.18$  \\
\bottomrule
\end{tabular}
\end{table*}

\subsubsection{Stress test approaching the stability boundary}\label{sec:num_boundary}

The four scenarios above ($\Delta\in\{8,12,16\}$\,ms) have $\rho$ in the safe
region $0.05$--$0.21$, far from $\rho=1$, so they only confirm \emph{stable far from
the boundary} and cannot test whether the criterion truly holds at the critical
point. To fill this side, for each real scenario we sweep the step along
$\rho=\Delta/(4\hat\mu_d)$ from the safe region to approach and cross $1$ (taking
$\Delta$ as a series of multiples of $4\hat\mu_d$), estimating the steady mean
dwell $\E[\delta_\infty]$ on the retained-flow-bounded continuous traces.

Figure~\ref{fig:boundary}(a) shows the four scenarios' $\E[\delta_\infty]$ rising
sharply as $\rho$ approaches $1$ and amplifying further beyond $1$. But a single
window's mean cannot distinguish a bounded-but-large steady state from unbounded
growth: for the weakly right-skewed VPN audio (median $\approx$ mean),
$\E[\delta_\infty]$ is still about $60$\,ms at $\rho=0.8$, jumping to $663$ then
$1678$\,ms at $\rho=0.95,1.0$, the knee cleanly at $\rho=1$. To separate \emph{large
but bounded} from \emph{divergent}, Figure~\ref{fig:boundary}(b) gives the
window-length diagnostic: on VPN audio, fixing $\rho$ and growing the window $n$
from $2000$ to $40000$ packets---at $\rho=0.9$, $\E[\delta_\infty]$ stays at
$152$--$167$\,ms (nearly window-invariant, growth about $1.07\times$), a bounded
steady state; at $\rho=1.0,1.1$ it grows approximately linearly with the window
($812\!\to\!11564$, $2546\!\to\!46845$\,ms, about $14$--$18\times$), i.e.\
unbounded divergence. The criterion $\rho<1$ thus correctly locates the divergence
critical point on real backgrounds, not only confirming stability far away.

However, on strongly in-flow right-skewed scenarios with dispersed cross-flow
intensity (Tor browsing, Tor file-transfer, NonTor audio), the ceiling from the
overall mean $\hat\mu_d$ is optimistic. Their window diagnostics show that even at
pooled $\rho=0.9$, nominally stable, $\E[\delta_\infty]$ already grows with the
window; the cause is a sizeable dispersion of per-flow means, so the pooled mean
transfers the idle slack of low-intensity flows to high-intensity ones, whereas a
real embedder cannot drain dwell across flows. This does not contradict
Theorem~\ref{thm:robust}; it shows the \emph{time-average} must be that inside a
single real queue, not one obtained by pooling many disconnected flows after the
fact. Hence a direct configuration rule: judge stability and configure by feeding
the densest (smallest-mean) flow's or a low-quantile per-flow mean, not the pooled
mean, into $\Delta<4\bar\mu_d$---keeping a margin below the pooled boundary
commensurate with the cross-flow intensity dispersion.

Finally, the physical meaning of these numbers. In Figure~\ref{fig:boundary}, the
steady dwell at $\rho\ge1$ reaches thousands to tens of thousands of milliseconds
(at $\rho=1.1$, VPN audio accumulates $4.68\times10^4$\,ms $\approx47$\,s on the
longest window); this is simulated on the budget-free unconstrained recursion
(Eq.~\eqref{eq:delta_update}) and characterizes how large the dwell would grow
without a cap, cleanly separating divergence from boundedness---not a delay
actually observed in deployment. Such dwell never appears in a real system:
tens-of-seconds delay far exceeds transport/application timeouts, so the carried
connection resets or drops first; and the embedder's finite budget $\bB$
(Section~\ref{sec:buffer}) truncates dwell to an overflow event (the output IPD
loses its anchor) once it touches $\bB$. Thus the huge $\E[\delta_\infty]$ at $\rho\ge1$ should be read as the
numerical fingerprint of a diverging unconstrained dwell, whose deployment
meaning is that any finite budget is breached persistently.

\begin{figure*}[t]
\centering
\includegraphics[width=0.92\textwidth]{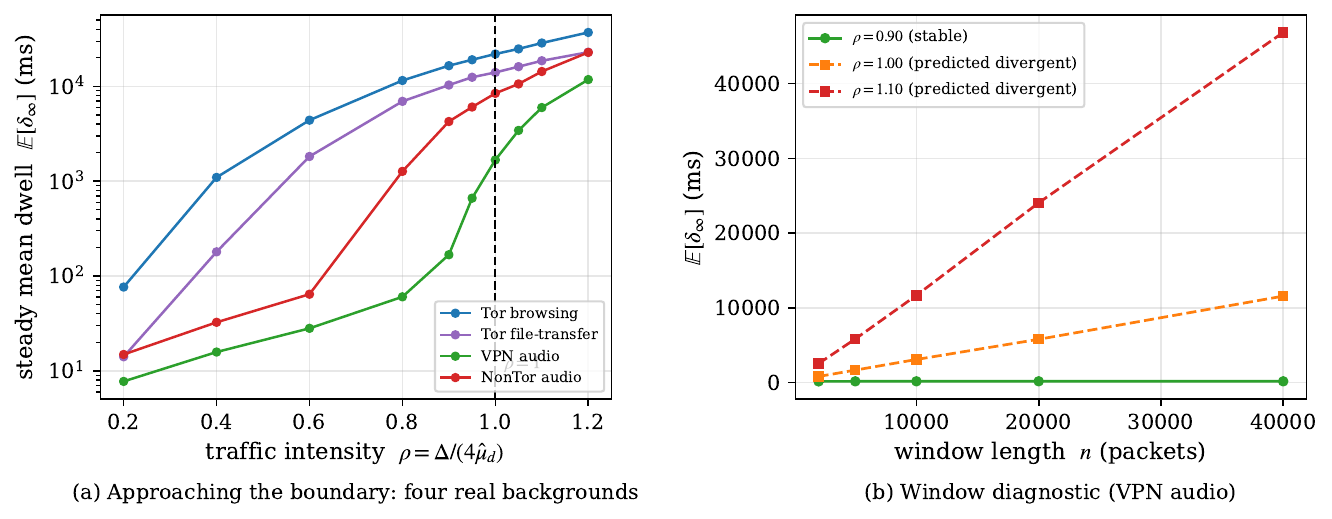}
\caption{Stress test approaching the stability boundary.}\label{fig:boundary}
\end{figure*}

\section{Conclusion}\label{sec:conclusion}

This paper studies causal IPD-QIM watermarking for active attribution of encrypted
network flows, with the core question: for what quantization step $\Delta$ can the
watermark be embedded continuously without the buffer dwell growing without bound?
Lemma~\ref{lem:reduction} writes the causal-QIM dwell recursion in Lindley form,
but this substitution only provides the recursion structure and does not reduce the
fixed-lattice injection to an exogenous uniform input. We therefore analyze
directly the busy-state drift of the fixed dual-lattice, equiprobable-random-bit
rule in the large-dwell region, where the key fact is that once the effective
interval is pressed to zero the mean injection converges to $\Delta/4$.

Based on this drift structure, we obtain a stability criterion away from the
critical boundary: for i.i.d.\ backgrounds the system is stable iff
$\mu_d>\Delta/4$ (i.e.\ $\Delta<4\mu_d$, Theorem~\ref{thm:stability}); for
stationary-ergodic bursty backgrounds it generalizes to $\bar\mu_d>\Delta/4$
(i.e.\ $\bar\rho<1$, Theorem~\ref{thm:robust}). Stability is thus decided by the
time-average drain capacity, and local instantaneous overload does not necessarily
cause divergence. Queue stability gives the upper end $4\bar\mu_d$ and decoding
reliability the exogenous lower end $\Delta\ge c\sigma_\xi$ ($c=4Q^{-1}(\epsilon/2)$);
together they form the conservative operating window
$\Delta\in[c\sigma_\xi,4\bar\mu_d)$, separating \emph{can decode} from \emph{can embed
stably}---the lower end controlled by the channel jitter, the upper end determined
intrinsically by the causal queue dynamics.

Numerical experiments test the criterion at two levels. Synthetic backgrounds that
exactly meet the theoretical assumptions confirm the recursion identity, the busy
injection, the transition at $\rho=1$, the critical point's independence of the
distribution family, and the near-critical divergence scaling; real application-
level IPD experiments, without assuming i.i.d.\ or finite-state Markov structure,
draw continuous segments inside single real flows and avoid spurious draining from
cross-flow splicing. On four real continuous traces, as long as the empirical
intensity is well below the boundary the steady dwell stays finite; as the step
approaches and crosses $\rho=\Delta/(4\hat\mu_d)=1$, the unconstrained dwell shows
clear critical amplification and divergence. The real data also expose a
configuration boundary: the cross-flow pooled mean can overestimate the stability
margin, so stability should be judged by the time-average inside a single queue,
using the densest-flow or a low-quantile per-flow mean when needed.

The applicability of the criterion is bounded by the drift computation itself. The
background must have a well-defined long-run time average, and the embedder must
keep the fixed dual-lattice, equiprobable-random-bit rule analyzed here; once
per-symbol dither, non-equiprobable coding, dynamic phase, or a feedback step is
added, the busy-state mean injection is no longer necessarily $\Delta/4$ and the
stability upper bound must be re-established from the new rule. The budget $\bB$
likewise does not enter the stability threshold: it only measures, after the
unconstrained recursion is already stable, the tail risk that the required
exact-alignment dwell exceeds the available budget; if the time-average drift is
nonnegative, any finite budget can only postpone overflow, not restore stability.
Hence what we provide is not a complete deployment protocol but a stability
pre-screen for fixed-step causal QIM.

Beyond this pre-screen, several harder problems emerge naturally. The closest is an
overflow-and-budget theorem: for $\rho<1$, the decay of the steady tail
$\Prob(\delta_\infty>\bB)$ with the budget is a large-deviations problem, whose
exponential-versus-subexponential dichotomy and the resulting buffer-sizing and
effective-bandwidth characterization decide how rare overflow is given stability.
Complementary is active control in the overloaded regime: when the time-average
drift is nonnegative a fixed step must diverge, so one must shed load online on
detecting overload---adaptive step, selective dropping, or congestion-triggered
injection back-off---and re-establish a stability guarantee in closed loop; this in
turn requires online estimation of the time-average drain $\bar\mu_d$ on
non-stationary, session-drifting backgrounds, pushing the offline criterion toward
real-time control. A more fundamental direction comes from the core quantity
itself: the busy-state mean injection $\E_w[R_w(0)]$ is the sole embedding-side
quantity setting the stability ceiling, and the $\Delta/4$ of the fixed dual
lattice and the $\Delta/2$ under dither are only two of its endpoints; a unified
characterization of this quantity and the corresponding threshold across general
QIM variants (dither, non-equiprobable coding, dynamic phase, multi-lattice) would
lift the criterion from a single rule to a family. Finally, incorporating the real
(correlated, heavy-tailed) channel-jitter distribution, blind frame
synchronization, and soft decoding into a joint design with queue stability is what
turns the two-sided window into an end-to-end decodable deployment system.

\section*{Acknowledgment}
This work was supported by the National Natural Science Foundation of China
(Grants No. U2436601).

\bibliographystyle{IEEEtran}
\bibliography{references}

\end{document}